\documentclass{IEEEoj}
\usepackage{cite}
\usepackage{amsmath,amssymb,amsfonts}
\usepackage{algorithmic}
\usepackage{graphicx}
\usepackage{textcomp}
\usepackage{amsthm}
\usepackage{algorithm}
\usepackage{xcolor}
\usepackage{latexsym}
\usepackage{float}
\usepackage{indentfirst}
\usepackage{mdwmath}
\usepackage{mdwtab}
\usepackage{subfig}
\usepackage{setspace}
\usepackage{stfloats}
\usepackage{url}
\usepackage{multirow}

\usepackage[colorlinks=true, allcolors=blue]{hyperref}

\def\BibTeX{{\rm B\kern-.05em{\sc i\kern-.025em b}\kern-.08em
    T\kern-.1667em\lower.7ex\hbox{E}\kern-.125emX}}
\AtBeginDocument{\definecolor{ojcolor}{cmyk}{0.93,0.59,0.15,0.02}}
\def\OJlogo{\vspace{-4pt}\hskip-4pt\includegraphics[height=18pt]{OJcoms.png}}

\newtheorem{remark}{Remark}
\newtheorem{theorem}{Theorem}

\newtheorem{lemma}{Lemma}

\newtheorem{corollary}{Corollary}

\newtheorem{proposition}{Proposition}

\begin{document}
\receiveddate{XX Month, XXXX}
\reviseddate{XX Month, XXXX}
\accepteddate{XX Month, XXXX}
\publisheddate{XX Month, XXXX}
\currentdate{16 March, 2026}
\doiinfo{OJCOMS.2026.011100}

\title{On the Performance of Single/Dual Fluid Antenna Systems}

\author{JIANGSHENG HUANGFU\IEEEauthorrefmark{1}, ZHENGYU SONG \IEEEauthorrefmark{1} \IEEEmembership{(Member, IEEE)}, \\
TIANWEI HOU \IEEEauthorrefmark{1} \IEEEmembership{(Member, IEEE)}, ANNA LI \IEEEauthorrefmark{2} \IEEEmembership{(Member, IEEE)}, YUANWEI LIU \IEEEauthorrefmark{3} \IEEEmembership{(Fellow, IEEE)}, \\
AND ARUMUGAM NALLANATHAN\IEEEauthorrefmark{4,5}
\IEEEmembership{(Fellow, IEEE)}}
\affil{School of Electronic and Information Engineering, Beijing Jiaotong University, Beijing 100044, China}
\affil{School of Computing and Communications, Lancaster University, Lancaster LA1 4WA, U.K.}
\affil{Department of Electrical and Electronic Engineering, The University of Hong Kong, Hong Kong}
\affil{School of Electronic Engineering and Computer Science, Queen Mary University of London, London, U.K.}
\affil{Department of Electronic Engineering, Kyung Hee University, Yongin-si, Gyeonggi-do 17104, Korea}
\corresp{CORRESPONDING AUTHOR: TIANWEI HOU (e-mail: twhou@bjtu.edu.cn).}
\authornote{This work was supported in part by the Fundamental Research Funds for the Central Universities under Grant 2024JBMC014 and 2023JBZY012, in part by the Beijing Natural Science Foundation L232041, in part by EPSRC grant numbers to acknowledge are EP/W004100/1, EP/W034786/1, EP/Y037243/1 and EP/W026813/1. }
\markboth{Preparation of Papers for IEEE OPEN JOURNALS}{Author \textit{et al.}}

\begin{abstract}
The emerging technology of fluid antenna systems (FASs) represents a promising next-generation reconfigurable antenna solution, capable of exploiting the full spatial diversity within a predefined space by finely reconfiguring the positions of radiating elements. In this paper, the performance of FAS over spatially correlated Rayleigh fading channels is investigated for two distinct scenarios: a multiple-input single-output (MISO) configuration, where a receiver with a single-antenna FAS is served by a multi-antenna transmitter (MISO-FAS), and a single-input single-output setup where single-antenna FASs are equipped at both the transmitter and receiver (Dual-FAS).
Exact expressions and closed-form approximations for the outage probability (OP) of both the MISO-FAS and Dual-FAS models are derived as the core contributions of this work. To provide deeper insights into system performance, the diversity orders for each model are also derived and analyzed. Analytical results demonstrate that increasing the number of ports significantly enhances system performance.
The theoretical analysis is corroborated by key findings from our simulations, demonstrating that:
$i$) Both the MISO-FAS and Dual-FAS models achieve considerable performance gains as the number of ports is increased;
$ii$) System performance for both configurations is inversely related to the level of port correlation; lower correlation leads to better performance;
$iii$) In the high signal-to-noise ratio regime, the Dual-FAS model surpasses the performance of the MISO-FAS model.
\end{abstract}

\begin{IEEEkeywords}
Diversity, fluid antenna system (FAS), maximum ratio transmission (MRT), outage probability, Rayleigh fading channels.
\end{IEEEkeywords}

\maketitle

\section{INTRODUCTION}
Multiple-input multiple-output (MIMO) has long been a key technology in wireless communication. Its main advantage is the ability to exploit the spatial domain for transmitting information, in addition to the traditional time and frequency domains~\cite{MIMO0,MIMO1,MIMO2}. 
In current fifth-generation (5G) networks, massive MIMO has emerged as a cornerstone technology and has been widely studied and deployed~\cite{Marzetta2010, Massive1}. The base station (BS) employs a very large array of antennas, which significantly improves spectral efficiency, network capacity, and reliability~\cite{Massive2}. Massive MIMO operates by focusing energy directly towards users, which increases data rates and enhances overall performance~\cite{Massive3}. 
As the natural evolution of 5G, sixth-generation (6G) networks are expected to bring revolutionary advances, which critically rely on the increase in degrees of freedom (DoF) in the physical layer~\cite{6G1,6G2,6G3}. Among various emerging technologies, extra-large scale MIMO (XL-MIMO) has been widely regarded as a promising enabler~\cite{Wang-xlmimo,Wang2024}.
Other technologies including non-orthogonal multiple access (NOMA) \cite{NOMA} and reconfigurable intelligent surfaces (RIS) \cite{RIS} are receiving significant attention. 
Combining these technologies with the MIMO framework is a particularly promising strategy. Recent studies demonstrate that this integration can further reduce interference~\cite{MIMO_RIS_NOMA,MIMO_LIS} and substantially enhance signal quality~\cite{MIMO_RIS}.

In contrast to the BS, which can support massive antenna arrays, user equipment (UE) is fundamentally limited by its compact physical dimensions, particularly in devices such as mobile phones, wearable electronics, and Internet of Things (IoT) sensors~\cite{Massive1}. This severe spatial constraint directly restricts the number of fixed antennas that can be accommodated on a single device, creating a primary bottleneck that limits the end-to-end communication performance. To ensure independent channel fading and achieve spatial diversity, conventional fixed antennas typically require a minimum spatial separation, often on the order of half a wavelength~\cite{Sharawi2013}. Consequently, simply increasing the antenna density within a confined UE area is not a viable solution, as it inevitably introduces strong mutual coupling between adjacent antenna elements. This mutual coupling alters radiation patterns, degrades antenna efficiency, and increases spatial correlation, thereby further degrading the overall system performance rather than improving it~\cite{Sharawi2013,couple}. Furthermore, the feasibility of integrating a large number of conventional antennas at the UE is heavily constrained by hardware limitations. Each active antenna typically requires a dedicated radio-frequency (RF) chain. Scaling up the number of RF chains incurs prohibitive hardware costs, and imposes unacceptable strains on the stringent power budgets of portable and battery-powered IoT devices~\cite{Alkhateeb2014, Han2015,expensive}.

These limitations have motivated the search for alternative antenna technologies capable of delivering substantial performance gains within a compact physical space. In this context, the fluid antenna system (FAS) has emerged as a promising and flexible paradigm for future 6G networks \cite{FAS}. 
Fundamentally, a FAS is a software-controllable radiating structure that can change its position within a confined region. The core principle is analogous to classical spatial diversity, but with a revolutionary difference: instead of using multiple fixed antennas to sample the channel at different points, a FAS employs a single radiating element that can move to any of numerous candidate locations, known as ports \cite{FAS1,FAS2}. This unique capability allows the system to dynamically identify and select the port with the most favorable channel conditions, thereby enabling a more complete exploitation of spatial diversity \cite{block}.
Enabled by recent advances in reconfigurable antenna technologies, the physical realization of FAS can be achieved through various mechanisms.
One fundamental approach to realize the FAS involves using mechanically movable antennas. In this setup, stepper motors drive RF elements to reconfigure antenna positions in space~\cite{mecha1,mecha2,mecha3}. An alternative approach utilizes the inherent flexibility of liquid materials as radiating antenna elements. Both metallic and non-metallic liquids can be employed in the design of FAS~\cite{liquid1,liquid2,liquid3}. More recently, liquid-based FAS designs have been proposed, where micro-pumps are employed to control the movement of the radiating elements within pre-designed fluidic channels \cite{liquidFAS1,liquidFAS2,liquidFAS3}. A pump-free electro-wetting-on-dielectric (EWOD) technique was adopted in \cite{EWOD}, achieving a motion speed of 10 mm/s. Another prominent implementation is based on pixel reconfigurable antennas (PRAs), where electronic switches are used to reconfigure the antenna structure, enabling a response time on the order of milliseconds or less \cite{pixel1,pixel2,pixel3}. This feature allows PRAs to rapidly adapt to variations in channel state information (CSI), thereby fully exploiting the potential of the FAS \cite{tutorial}.

\subsection{MOTIVATIONS AND CONTRIBUTIONS}
FAS was first proposed as a novel concept for wireless communications in 2020~\cite{FAS, FAS_ER}. Following its inception, several tutorial letters and comprehensive surveys have summarized the fundamentals, channel models, and open research opportunities of FAS technology~\cite{tutorial, FAS_partI, FAS_partII}. A critical foundational topic in FAS research is the accurate modeling of spatial correlation among the densely packed ports, which has been extensively investigated in recent literature~\cite{block, wong2022closed, Khammassi-2023}. 
Depending on the architecture, a FAS can be deployed at the transmitter, the receiver, or both. In single-user scenarios, the majority of research has focused on the Rx-SISO-FAS configuration, where the receiver is equipped with a FAS and the transmitter uses a single fixed antenna. Early works established the performance baseline by analyzing FAS over Rayleigh fading channels to derive the outage probability (OP)~\cite{FAS} and a lower bound for the ergodic rate (ER)~\cite{FAS_ER}. More recently, the diversity order of FAS was rigorously established, and exact closed-form expressions for the OP, ER, and diversity gains were derived, providing accurate and tractable performance characterizations~\cite{Zhao2025FAS, FAS_OP_diversity}.
Subsequent studies have extended these analyses to more general fading environments. For instance, the OP analysis was expanded to Nakagami-$m$ fading channels in~\cite{nakagami1}, with novel asymptotic matching methods~\cite{nakagami2} and Gaussian copula functions~\cite{copula} employed to handle correlated Nakagami-$m$ fading. Furthermore, performance under Rician fading was investigated in~\cite{rician}, concluding that a strong line-of-sight (LoS) component tends to degrade the FAS performance. The impact of nonlinear fading was also explored via $\alpha$-$\mu$ channels in~\cite{FAS_alpha_mu}. Beyond outage and ergodic-rate metrics, the average symbol error rate (ASER) of FAS utilizing rectangular and hexagonal quadrature amplitude modulation (QAM) schemes was evaluated in~\cite{FAS_ASER}.
While the aforementioned studies primarily consider scenarios where only a single port of the FAS is activated, recent works have explored multi-port activation, such as combining received signals via maximal-ratio combining (MRC)~\cite{MRC}. However, for more complex setups, theoretical investigations remain limited. For example, the performance limits of MIMO-FAS systems—where FASs are deployed at both the transmitter and receiver—have only recently been analyzed using optimization techniques~\cite{trade}.

Motivated by the aforementioned challenges, in this paper, we provide a rigorous comparative analysis of two distinct FAS-enabled communication models:
\begin{enumerate}
    \item MISO-FAS: A multiple-input single-output (MISO) configuration where a transmitter equipped with multiple fixed-position antennas communicates with a receiver equipped with FAS.
    \item Dual-FAS: A single-input single-output (SISO) configuration where both the transmitter and the receiver are equipped with a FAS.
\end{enumerate}

The main contributions of this paper are summarized as follows:

\begin{itemize}

\item We consider a system where the transmitter is equipped with $N$ independent fixed-position antennas employing maximum ratio transmission (MRT) precoding, and the receiver is equipped with a FAS that can switch among $M$ ports to obtain the maximum gain. By accounting for spatial correlation, we derive the joint probability density function (PDF) and cumulative distribution function (CDF) of the channel gain.

\item We further consider the case where both the transmitter and receiver are equipped with FASs, with $M_T$ and $M_R$ ports, respectively. Taking channel correlation into account, we derive the joint PDF and CDF of the channel gain.

\item We derive the exact expressions for the OP of both models from the obtained PDF and CDF. In addition, closed-form expressions for the upper and lower bounds are derived. From the upper bound, the diversity orders are further characterized, which are determined by the sum of the number of ports and antennas for MISO-FAS, and by the product of the number of the transmitter and receiver ports for Dual-FAS.
    
\item Our simulation results demonstrate that 1) Increasing the number of ports significantly enhances system performance; 2) Stronger port correlation degrades system performance; 3) At low SNR, MISO-FAS achieves better performance, while Dual-FAS exhibits faster performance improvement at high SNR; 4) With the same number of receiver-side FAS ports, Dual-FAS outperforms MISO-FAS by increasing the number of transmitter-side ports, despite utilizing fewer RF chains.

\end{itemize}

\subsection{ORGANIZATIONS AND NOTATIONS}
Section \ref{sec:model} introduces the channel models for MISO-FAS and Dual-FAS. Based on these models, Section \ref{sec:analysis} presents the derivation of the analytical results. Subsequently, these results are validated through numerical simulations in Section \ref{sec:simu}. Finally, Section \ref{sec:conclude} summarizes the key findings and concludes the paper.

Throughout this paper, the following notation is adopted: \( \mathbb{E}(\cdot) \) denotes the expectation. \( |\cdot| \) denotes the modulus. \( X|Y \) signifies \( X \) conditioned on \( Y \). \( X \sim \mathcal{N} (\mu, \sigma^2) \) indicates a Gaussian random variable with mean \( \mu \) and variance \( \sigma^2 \).

\begin{figure}[t!]
\centering
\includegraphics[width=3.5in]{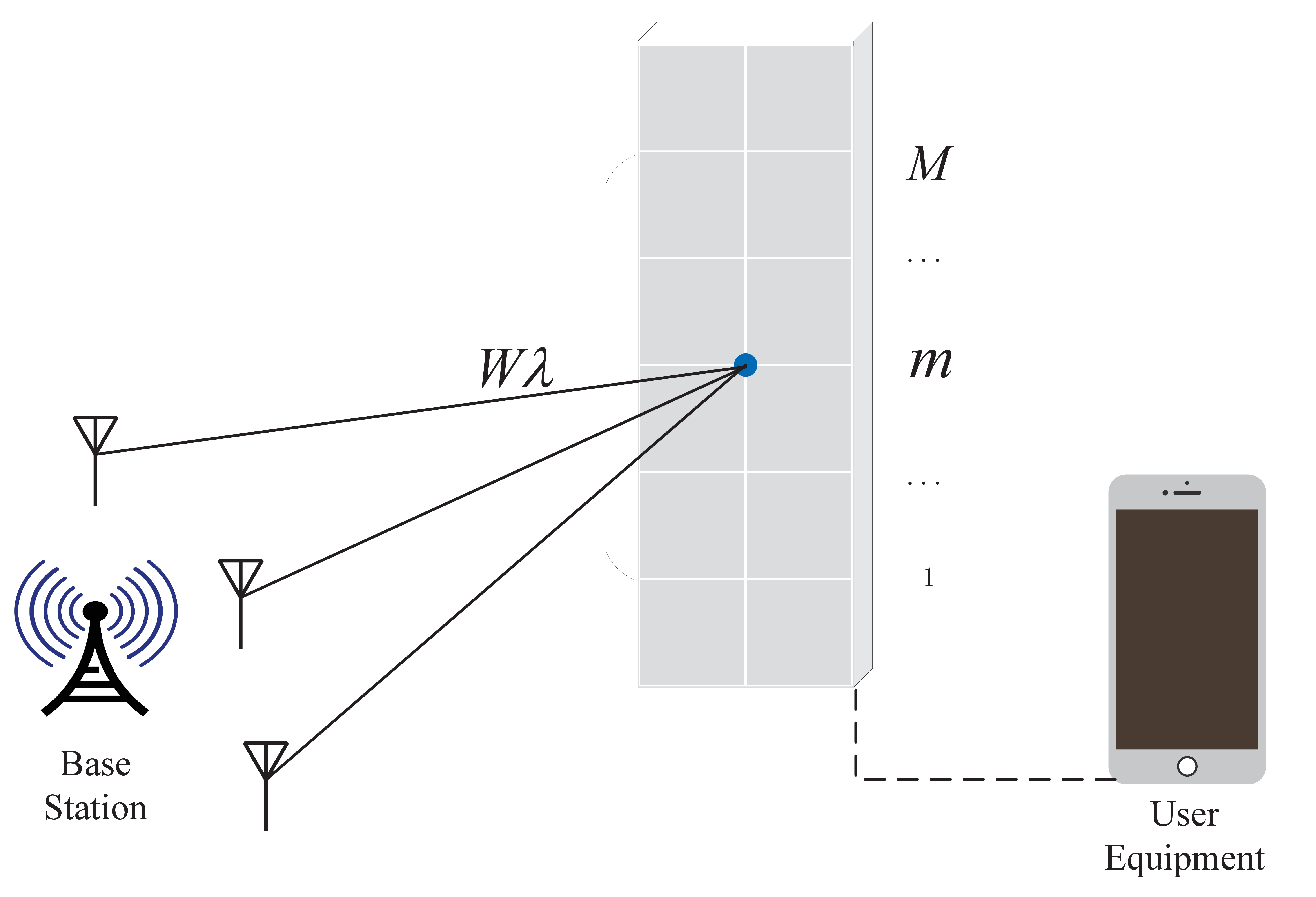}
\caption{The MISO-FAS model.}\label{system1}
\vspace{-3mm}
\end{figure}

\section{SYSTEM AND CHANNEL MODEL}\label{sec:model}

\subsection{MISO-FAS}\label{sec:misomodel}
This subsection considers a wireless communication system, as shown in Fig.~\ref{system1}, where the BS is equipped with $N$ antennas, and the UE employs a FAS with $M$ ports. 
%In this FAS configuration, the ports are distributed on a linear space of $W\lambda$, where $W$ represents the normalized physical size of the FAS, and $\lambda$ denotes the wavelength.

We model the small-scale fading vector ${\mathbf{h}_m \in \mathbb{C}^{N \times 1}}$ using Rayleigh fading channels as follows:
\begin{equation}\label{channelvector}
{{\bf{h}}_m} = {\left[ {h_{m1}}, \ldots, {h_{mN}} \right]}^T,
\end{equation}
where $h_{mn}$ denotes the channel coefficient between the $n$-th antenna at the BS and the $m$-th port at the UE. Under this model, the amplitude of the channel gain, $|h_{mn}|$, follows a Rayleigh distribution, and its PDF is given by:
\begin{equation}\label{Rayleigh dis}
{p_{\left| {{h_{{mn}}}} \right|}}(r) = \frac{{2r}}{{{\sigma ^2}}}{e^{ - \frac{{{r^2}}}{{{\sigma ^2}}}}}, \quad r \ge 0,
\end{equation}
where $\sigma^2=\mathbb{E}[|h_{mn}|^2]$ is the average power of the channel gain.

Maximum-ratio transmission (MRT) is employed at the BS, for which the precoding vector $\mathbf{w}_m \in \mathbb{C}^{1\times N}$ is given by:
\begin{equation}\label{mrt}
  {\bf{w}}_m = \frac{{{\mathbf{h}_m^H}}}{{\left\| \mathbf{h}_m \right\|}}.
\end{equation}

We model the received signal at the $m$-th port with additive white Gaussian noise (AWGN) as follows:
\begin{equation}\label{AWGN1}
y_m = \sqrt{P}\mathbf{w}_m\mathbf{h}_m x + {\eta},
\end{equation}
where $P$ is the total transmit power, $x$ represents the transmitted signal with $\mathbb{E}[x]=0$, and $\eta\sim \mathcal{N}_c (0, \eta_0^2)$ is the noise.

The received SNR $\gamma_m$ can be expressed as:
\begin{equation}\label{SNR}
\begin{aligned}
\gamma_m &= \frac{{P{{\left|\mathbf{w}_m\mathbf{h}_m \right|}^2}}}{\eta_0^2 } = \frac{{P{{\left|\frac{{{\mathbf{h}_m^H}}}{{\left\| \mathbf{h}_m \right\|}}\mathbf{h}_m \right|}^2}}}{\eta_0^2 } \\
 &= \frac{{P{{\left\| \mathbf{h}_m \right\|}^2}}}{\eta_0^2 } = \frac{{P\sum\limits_{n = 1}^N {|h_{mn}|^2} }}{\eta_0^2 } \equiv  \frac{{PX_{m} }}{\eta_0^2 },
 \end{aligned}
\end{equation}
where $X_{m}\buildrel \Delta \over =\sum\limits_{n = 1}^N {|h_{mn}|^2} $ is the precoding gain.
%, and $\bar{\gamma}\buildrel \Delta \over =\frac{P}{\eta_0^2 }$ is the average receive SNR.

The average single-branch SNR $\bar{\gamma}$ at each port is given by:
\begin{equation}\label{avsnr}
\bar{\gamma}=   \frac{{P\mathbb{E}[|h_{mn}|^2]}}{\eta_0^2 }  =\frac{{P\sigma^2 }}{\eta_0^2 } .
\end{equation}

The system selects the port with the highest instantaneous received SNR to be activated. Therefore, the resultant SNR of the proposed system is given by:
\begin{equation}\label{snrfas}
\gamma_{{\rm{FAS}}} = \max \left\{ \gamma_1, \gamma_2, \ldots, \gamma_M \right\}.
\end{equation}

%We adopt an equally correlated model, where the correlation factor between any two sub-channels on the same side is identical. The receive correlation factors are denoted by $\rho_{n_1n_2}=\rho^2$ for $n_1 \ne n_2$. 
Due to the compact physical placement of the ports, the channel gains are spatially correlated. We adopt an equally correlated model, where the correlation factor between any two sub-channels on the same side is identical. The receive correlation factors are denoted by $\rho_{m_1m_2}=\rho^2$ for $m_1 \ne m_2$. With this correlation model, the channel responses can be modeled as:
\begin{equation}\label{channel model-0}
\left\{ \begin{aligned}
{h_{1n}} &= \sigma {x_{0n}} + j\sigma {y_{0n}}\\
{h_{mn}} &= \sigma \left( {\sqrt {1 - \rho^2} {x_{mn}} + {\rho }{x_{0n}}} \right) \\
&+ j\sigma \left( {\sqrt {1 - \rho^2} {y_{mn}} + {\rho}{y_{0n}}} \right), \\
&\quad\quad \text{for } n=1, 2, \dots, N, \text{ and } m=2, \ldots, M,
\end{aligned}\right.
\end{equation}
where $x_{0n}, y_{0n}, x_{mn}, y_{mn}$ are mutually independent Gaussian random variables (RVs) with zero mean and a variance of $1/2$.
Since the channel amplitude is assumed to be Rayleigh-distributed, its square follows an exponential distribution, and the sum of squares follows a chi-square distribution. 
%Thus, the precoding gain at each port can be parameterized as:
%\begin{equation}\label{channelgain}
%\left\{ \begin{array}{l}
%{X_1} = \sum\limits_{n = 1}^N {{{\left| {{h_{1n}}} \right|}^2}} \\
%{X_m} = \sum\limits_{n = 1}^N {{{\left| {{h_{mn}}} \right|}^2}}, \quad \text{for } m = 2, \ldots ,M.
%\end{array} \right.
%\end{equation}
%where $\chi_0, \chi_m$ are independent noncentral chi-square distribution RVs.
%, and $\rho '$ is the correlation factor between  $\chi_0$ and $\chi_m$.

\subsection{DUAL-FAS}
This subsection considers a wireless communication system, as shown in Fig.~\ref{system2}, where the BS employs a FAS with $M_T$ ports, and the UE employs a FAS with $M_R$ ports.

As in the MISO-FAS model, there exists spatial correlation among the subchannels due to the close spacing between ports. Thus, we also adopt the equally correlated model, where the correlation factor between any two ports on the same side is identical. The receive correlation factor between any two ports is denoted by $\rho_1^2$, while the transmit correlation factor is denoted by $\rho_2^2$. Thus, the receiving and transmitting correlation matrices, $\mathbf{R}_R$ and $\mathbf{R}_T$, are given by:
\begin{equation}\label{rr}
\begin{array}{l}
{{\bf{R}}_R} = \left[ {\begin{array}{*{20}{c}}
1&{\rho _1^2}& \cdots &{\rho _1^2}\\
{\rho _1^2}&1& \cdots &{\rho _1^2}\\
 \vdots & \vdots & \ddots & \vdots \\
{\rho _1^2}&{\rho _1^2}& \cdots &1
\end{array}} \right],\\
{{\bf{R}}_T} = \left[ {\begin{array}{*{20}{c}}
1&{\rho _2^2}& \cdots &{\rho _2^2}\\
{\rho _2^2}&1& \cdots &{\rho _2^2}\\
 \vdots & \vdots & \ddots & \vdots \\
{\rho _2^2}&{\rho _2^2}& \cdots &1
\end{array}} \right].
\end{array}
\end{equation}

The channel matrix is modeled as:
\begin{equation}\label{AWGN2}
{\bf{H}}= {\mathbf{R}}_R^{\frac{1}{2}}{\bf{H}}_w{{\mathbf{R}}_T^{\frac{1}{2}}},
\end{equation}
where ${\bf{H}}_w$ is the independent and identically distributed (i.i.d.) channel matrix. The matrix ${\bf{H}}$ is explicitly given by:
\begin{equation}\label{H}
\mathbf{H} = \begin{bmatrix}
{h_{11}} & \cdots &{{h_{1M_T}}}\\
\vdots & \ddots  & \vdots \\
{{h_{M_R1}}}  & \cdots &{{h_{M_RM_T}}}
\end{bmatrix},
\end{equation}
where the element $h_{ij}$ denotes the complex channel coefficient from the $j$-th transmit port (TP) to the $i$-th receive port (RP). The proposed system comprises a total of $D = M_R M_T$ sub-channels.

The overall spatial correlation matrix of the proposed system is expressed as the Kronecker product of the transmit and receive correlation matrices:
\begin{equation}\label{R}
{{\bf{R}}}={{\bf{R}}_T^{\mathrm{T}}} \otimes{{\bf{R}}_R},
\end{equation}
where $\mathbf{R}$ is a $M_RM_T\times M_RM_T$ matrix, and the element $\rho_{ij}\in\mathbf{R}$ can be obtained as
\begin{equation}\label{rho}
{\rho _{ij}}=\left\{ \begin{array}{l}
1\quad{i_1} = {i_2},{j_1} = {j_2},\\
{\rho _1}\quad{i_1} = {i_2},{j_1} \ne {j_2},\\
{\rho _2}\quad{i_1} \ne {i_2},{j_1} = {j_2},\\
{\rho _1}{\rho _2}\quad{i_1} \ne {i_2},{j_1} \ne {j_2}.
\end{array} \right.
\end{equation}

With that, the complex channel gain from the $j$-th transmit port to the $i$-th receive port can be parameterized as:
\begin{equation}\label{para2}
\left\{ \begin{array}{l}
{h_{11}} = {z_0},\\
{h_{i1}} = \sqrt {1 - \rho _1^2} {z_{i1}} + {\rho _1}{z_0},\\
{h_{1j}} = \sqrt {1 - \rho _2^2} {z_{1j}} + {\rho _2}{z_0},\\
{h_{ij} }= \sqrt {1 - \rho _1^2\rho _2^2} {z_{ij}} + {\rho _1}{\rho _2}{z_0},\\
~for~i= 2, \ldots, M_R, j = 2, \ldots,M_T, 
\end{array} \right.
\end{equation}
where $z_0, z_{i1}, z_{1j}, z_{ij}$ are independent complex Gaussian RVs with zero mean and variance of $\sigma^2$.

\begin{figure}[t!]
\centering
\includegraphics[width=3.5in]{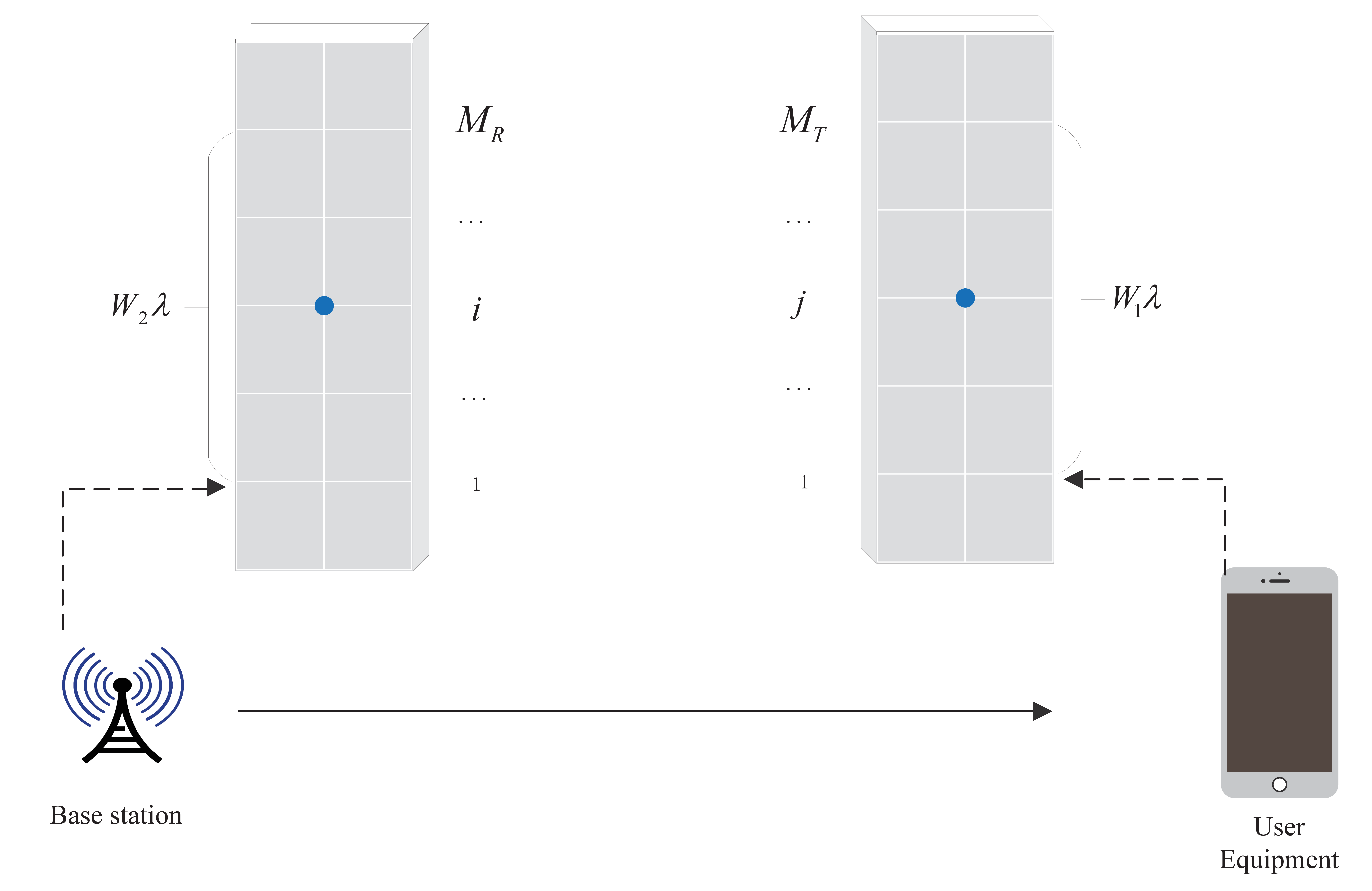}
\caption{The Dual-FAS model.}\label{system2}
\vspace{-3mm}
\end{figure}

\section{PERFORMANCE ANALYSIS}\label{sec:analysis}

In this section, we analyze the system performance of the proposed MISO-FAS and Dual-FAS models, respectively.

\subsection{MISO-FAS}
We first investigate the performance of the proposed MISO-FAS model, where the BS is equipped with $N$ antennas and the UE is equipped with a FAS with $M$ ports.

We assume the FAS always activates the port corresponding to the largest instantaneous precoding gain. The resulting precoding gain for the MISO-FAS, denoted by $X_{\rm FAS}$, is therefore given by:
\begin{equation}\label{channel model}
X_{\rm FAS} = \max \left\{ {X_1},{X_2}, \ldots, {X_M} \right\}.
\end{equation}
To analyze the statistical properties of $X_{\rm FAS}$, we first derive the joint PDF of ${X_1}, {X_2}, \ldots, {X_M}$.

\begin{lemma}\label{Lemma1:joint pdf}
Under the assumption of Rayleigh fading and the equally correlated channel model, the joint PDF of the precoding gains ${X_1}, {X_2}, \ldots, {X_M}$ across all ports of the MISO-FAS is given by:
\begin{equation}\label{jpdf}
\begin{aligned}
& p_{{X_1},{X_2}, \ldots, {X_M}}\left( {z_1}, z_2, \dots, z_M \right) =\frac{{{z_1}^{N-1}}}{{{\sigma ^{2N}\left(N-1\right)!}}}{e^{ - \frac{{{z_1}}}{{{\sigma ^2}}}}}\times\\
& \prod\limits_{m = 2}^M { \frac{1}{\sigma^2}{{e^{ - \frac{{{z_m} + {\rho ^2}{z_1}}}{{{\sigma ^2}(1 - {\rho ^2})}}}}{{\left( {\frac{{{z_m}(1 - {\rho ^2})}}{{{\rho ^2}{z_1}}}} \right)}^{\frac{N-1}{2}}}{I_{N-1}}\left( \sqrt{\frac{{ {4{\rho ^2}{z_1}{z_m}} }}{{1 - {\rho ^2}}}} \right)}} .
\end{aligned}
\end{equation}
\begin{proof}
See Appendix A.
\end{proof}
\end{lemma}

\begin{lemma}\label{Lemma2:joint CDF}
Under the assumption of Rayleigh fading and the equally correlated channel model, the joint CDF of the precoding gains ${X_1}, {X_2}, \ldots, {X_M}$ is given by:
\begin{equation}\label{cdf}
\begin{aligned}
& {F_{{X_1},{X_2}, \ldots, {X_M}}}(z) = \Pr(X_1 \le z, \ldots, X_M \le z) \\
& \quad = \int_0^{z} \frac{{{t_1}^{N-1}}}{{{\sigma ^{2N}\left(N-1\right)!}}}{e^{ - \frac{{{t_1}}}{{{\sigma ^2}}}}} \\
& \quad \quad \times {\left( {1 - {Q_N}\left( {\sqrt {\frac{{2\rho ^2t_1}}{{{\sigma ^2}(1 - {\rho ^2})}} } ,\sqrt {\frac{{2z}}{{{\sigma ^2}(1 - {\rho ^2})}}} } \right)} \right)^{M - 1}}d{t_1}.
\end{aligned}
\end{equation}
\begin{proof}
See Appendix B.
\end{proof}
\end{lemma}

We define the outage event as
 \begin{equation}\label{ope}
   \left\{ {\frac{{{{\gamma}_{\rm FAS}}}}{{{{\bar{\gamma}}}}}   < {\gamma _{th}}} \right\}=   \left\{{X_{\rm FAS}}<{\gamma_{th}}\right\},
 \end{equation}
 where $\gamma _{th}$ represents the SNR threshold.

\begin{theorem}\label{Theorem1:outage probability}
As defined in~\eqref{ope}, for the MISO-FAS model under Rayleigh fading, the OP can be written as:
\begin{equation}\label{op}
\begin{aligned}
&{P_{\text{out}}}({\gamma _{th}}) = \Pr(X_{\text{FAS}} <  \gamma_{th}) \\
&=\int_0^{{\gamma_{th}}} \frac{{{t_1}^{N-1}}}{{{\sigma ^{2N}\left(N-1\right)!}}}{e^{ - \frac{{{t_1}}}{{{\sigma ^2}}}}} \\
 &\times {\left( {1 - {Q_N}\left( {\sqrt {\frac{{2\rho ^2t_1}}{{{\sigma ^2}(1 - {\rho ^2})}} } ,\sqrt {\frac{{2}{\gamma_{th}}}{{{\sigma ^2}(1 - {\rho ^2})}}} } \right)} \right)^{M - 1}}d{t_1}.
\end{aligned}
\end{equation}

\begin{proof}
Substituting \(z_1=\dots=z_M=\gamma_{th}\) into the joint CDF in~\eqref{cdf} yields the OP and completes the proof.
\end{proof}
\end{theorem}

The OP expression in \eqref{op} is highly complex and given in an open form. Thus, we further analyze its closed-form upper and lower bounds.

\begin{corollary}\label{out_up1}
Under the assumption of Rayleigh fading and the equally correlated channel model, an upper bound on the OP of MISO-FAS for a given SNR threshold $\gamma_{th}$ is given by:
\begin{equation}\label{outup}
\begin{aligned}
P_{\text{out}}\left( {{\gamma _{th}}} \right) \le {{\tilde P}_{\text{out}}}\left( {{\gamma _{th}}} \right) &= \frac{1}{{\left( {N - 1} \right)!}}\gamma \left({N, {\gamma _{th}}} \right)\\
& \quad \times{\left( {1 - {e^{ - \frac{{2  {\gamma _{th}}}}{{{\sigma ^2}(1 - {\rho ^2})}}}}} \right)^{M - 1}}.
\end{aligned}
\end{equation}
\begin{proof}
See Appendix C.
\end{proof}
\end{corollary}

\begin{corollary}\label{out_low1}
Under the assumption of Rayleigh fading and the equally correlated channel model, a lower bound on the OP of MISO-FAS for a given SNR threshold $\gamma_{th}$ is given by:
\begin{equation}\label{outlow}
\begin{aligned}
P_{\text{out}}(\gamma_{th}) &\ge {{\hat P}_{\text{out}}}(\gamma_{th}) \\
&= \frac{1}{{\left( {N - 1} \right)!}}\gamma \left({N, \frac{\gamma _{th}}{\bar{\gamma}}} \right)\\
& \times{\left( {1 - {Q_N}\left( {\sqrt {\frac{{\rho ^2{\gamma _{th}}}}{{{\sigma ^2}(1 - {\rho ^2})}}} ,\sqrt {\frac{2  {\gamma _{th}}}{{{\sigma ^2}(1 - {\rho ^2})}}} } \right)} \right)^{M - 1}}.
\end{aligned}
\end{equation}
\begin{proof}
See Appendix D.
\end{proof}
\end{corollary}

\begin{proposition}\label{proposition1: do}
\emph{According to \textbf{Corollary~\ref{out_up1}}, the diversity order for the MISO-FAS model in the high-SNR regime is given by:}
\begin{equation}\label{dio}
{d_{MISO}} =  - \mathop {\lim }\limits_{\frac{1}{{{\gamma _{th}}}} \to \infty } \frac{{\log {P_{\text{out}}}}}{{\log \frac{1}{{{\gamma _{th}}}}}} \approx N+M.
\end{equation}
\begin{proof}
To derive the diversity order, we substitute the upper bound on the OP from \eqref{outup} into \eqref{dio}, which yields:
\begin{equation}\label{dio1}
\begin{aligned}
{d_{MISO}}& =  - \mathop {\lim }\limits_{\frac{1}{{{\gamma _{th}}}} \to \infty } \frac{{\log {P_{\text{out}}}}}{{\log \frac{1}{{{\gamma _{th}}}}}}\\
 &>  - \mathop {\lim }\limits_{\frac{1}{{{\gamma _{th}}}} \to \infty } \frac{{\log \left( {\gamma \left({N,{\gamma _{th}}} \right){{\left( {1 - {e^{ - \frac{{2{\gamma _{th}}}}{{{\sigma ^2}(1 - {\rho ^2})}}}}} \right)}^{M - 1}}} \right)}}{{\log \frac{1}{{{\gamma _{th}}}}}}\\
 & \approx  - \mathop {\lim }\limits_{\frac{1}{{{\gamma _{th}}}} \to \infty } \frac{{\log \left( {\frac{{{\gamma _{th}}^N}}{N}} \right) + \left( {M - 1} \right)\log \left( {{ \frac{{2{\gamma _{th}}}}{{{\sigma ^2}(1 - {\rho ^2})}}} } \right)}}{{\log \frac{1}{{{\gamma _{th}}}}}}\\
& \approx N+M-1 \approx N+M.
\end{aligned}
\end{equation}
Then, the proof is complete.
\end{proof}
\end{proposition}

\begin{remark}\label{remarkop}
Based on the results in~\eqref{dio}, the diversity order for the MISO-FAS model is approximately determined by the sum of the antennas and ports, $N+M$. This observation implies that the outage behavior can be continuously enhanced by simply utilizing more ports. This significant enhancement is achievable under the condition that the channels are not fully correlated (i.e., $|\rho| < 1$).

\end{remark}

\subsection{DUAL-FAS}
This subsection analyze the performance of the proposed Dual-FAS, where both the BS and the UE employ FAS with $M_T$ and $M_R$ ports, respectively.

The small-scale fading follows Rayleigh fading, which implies that the channel gain, $|h_{ij}|$, follows a Rayleigh distribution. Consequently, the channel power gain, $|h_{ij}|^2$, follows an exponential distribution. The instantaneous SNR of the $(i, j)$-th sub-channel, denoted by $\gamma_{mn}$, is thus given by:
\begin{equation}
    \gamma_{ij} = \frac{P}{\eta_0^2} |h_{ij}|^2 ={\bar \gamma '}\frac{|h_{ij}|^2}{\mathbb{E}[|h_{ij}|^2]} = \bar{\gamma} |h_{ij}|^2,
\end{equation}
where $\bar{\gamma}' = \frac{P \mathbb{E}[|h_{ij}|^2]}{\eta_0^2}$ is the average SNR per sub-channel. The system maximizes the received SNR by always selecting the TP-RP pair that provides the maximum channel gain. The resulting channel gain, denoted as $|h_{\text{FAS}}|$, is given by:
\begin{equation}\label{channel model}
|h_{\text{FAS}}| = \max \left\{ |h_{11}|, |h_{12}|, \ldots, |h_{M_RM_T}| \right\}.
\end{equation}

To analyze the statistical properties of $|h_{\text{FAS}}|$, we first derive the joint PDF of all channel gains, i.e., $|h_{11}|, |h_{12}|, \ldots, |h_{M_RM_T}|$.

\begin{lemma}\label{Lemma3:joint pdf}
Under the assumption of Rayleigh fading and the Kronecker product correlation model, the joint PDF of the channel gains of Dual-FAS, $\left| {{h_{11}}} \right|,\left| {{h_{12}}} \right|, \ldots ,\left| {{h_{M_RM_T}}} \right|$, is given by
\begin{equation}\label{jpdf2}
\begin{aligned}
&p_{\left| {{h_{11}}} \right|,\left| {{h_{12}}} \right|, \ldots ,\left| {{h_{M_RM_T}}} \right|}\left( {{r_{11}}, \ldots {r_{M_RM_T}}} \right) = \frac{{2{r_{11}}}}{{{\sigma ^2}}}\exp \left( { - \frac{{r_{11}^2}}{{{\sigma ^2}}}} \right)\\
 &\times \prod\limits_{i = 2}^{M_R} {\frac{{2{r_{i1}}}}{{{\sigma ^2}(1 - \rho _1^2)}}} \exp \left( { - \frac{{r_{i1}^2 + \rho _1^2r_{11}^2}}{{{\sigma ^2}(1 - \rho _1^2)}}} \right){I_0}\left( {\frac{{2{\rho _1}{r_{11}}{r_{i1}}}}{{{\sigma ^2}(1 - \rho _1^2)}}} \right)\\
& \times \prod\limits_{j = 2}^{M_T}{\frac{{2{r_{1n}}}}{{{\sigma ^2}(1 - \rho _2^2)}}} \exp \left( { - \frac{{r_{1j}^2 + \rho _2^2r_{11}^2}}{{{\sigma ^2}(1 - \rho _2^2)}}} \right){I_0}\left( {\frac{{2{\rho _2}{r_{11}}{r_{1j}}}}{{{\sigma ^2}(1 - \rho _2^2)}}} \right)\\
& \times \prod\limits_{i = 2}^{M_R} {\prod\limits_{j = 2}^{M_T} {\frac{{2{r_{ij}}}}{{{\sigma ^2}(1 - \rho _1^2\rho _2^2)}}} } \exp \left( { - \frac{{r_{ij}^2 + \rho _1^2\rho _2^2r_{11}^2}}{{{\sigma ^2}(1 - \rho _1^2\rho _2^2)}}} \right)\\
&\times{I_0}\left( {\frac{{2{\rho _1}{\rho _2}{r_{ij}}{r_{11}}}}{{{\sigma ^2}(1 - \rho _1^2\rho _2^2)}}} \right).
\end{aligned}
\end{equation}
\begin{proof}
See Appendix E.
\end{proof}
\end{lemma}

\begin{lemma}\label{Lemma4:joint cdf}
Under the assumption of Rayleigh fading and the Kronecker product correlation model, the joint CDF of the channel power gains of Dual-FAS, $\left| {{h_{11}}} \right|^2,\left| {{h_{12}}} \right|^2, \ldots ,\left| {{h_{M_RM_T}}} \right|^2$, is given by
\begin{equation}\label{jcdf2}
\begin{aligned}
&{F_{\left| {{h_{11}}} \right|^2,\left| {{h_{12}}} \right|^2, \ldots ,\left| {{h_{M_RM_T}}} \right|^2}}\left( {{r_{11}}, \ldots {r_{M_RM_T}}} \right) = \int_0^{{r_{11}}} {{{\mathop{\rm e}\nolimits} ^{ - t}}} \\
& \times \prod\limits_{i = 2}^{M_R} {\left( {1 - {Q_1}\left( {\sqrt {\frac{{2\rho _1^2}}{{1 - \rho _1^2}}t} ,\sqrt {\frac{{2{r_{i1}}}}{{1 - \rho _1^2}}} } \right)} \right)} \\
& \times \prod\limits_{j= 2}^{M_T} {\left( {1 - {Q_1}\left( {\sqrt {\frac{{2\rho _2^2}}{{1 - \rho _2^2}}t} ,\sqrt {\frac{{2{r_{1j}}}}{{1 - \rho _2^2}}} } \right)} \right)} \\
& \times \prod\limits_{i = 2}^{M_R} {\prod\limits_{j = 2}^{M_T} {\left( {1 - {Q_1}\left( {\sqrt {\frac{{2\rho _2^2\rho _1^2t}}{{1 - \rho _1^2\rho _2^2}}} ,\sqrt {\frac{{2{r_{ij}}}}{{1 - \rho _1^2\rho _2^2}}} } \right)} \right)} } dt.
\end{aligned}
\end{equation}

\begin{proof}
See Appendix F.
\end{proof}
\end{lemma}

The outage event occurs when the normalized received SNR falls below the SNR threshold $\gamma_{th}$:
\begin{equation}\label{ope2}
\left\{ {\frac{{{\gamma _{{\rm{FAS}}}}}}{{\bar \gamma }} = |{h_{{\rm{FAS}}}}{|^2} < {\gamma _{th}}} \right\}.
\end{equation}

\begin{theorem}\label{Theorem2:op}
Based on the definition of the outage event in~\eqref{ope2}, the OP of the Dual-FAS under Rayleigh fading channels is expressed as:
\begin{equation}\label{op2}
\begin{aligned}
&P_{\text{out}}\left( {{\gamma _{th}}} \right) = \int_0^{{\gamma _{th}}} {{{\mathop{\rm e}\nolimits} ^{ - t}}} \\
& \times \prod\limits_{i = 2}^{M_R} {\left( {1 - {Q_1}\left( {\sqrt {\frac{{2\rho _1^2t}}{{1 - \rho _1^2}}} ,\sqrt {\frac{{2{\gamma _{th}}}}{{1 - \rho _1^2}}} } \right)} \right)} \\
& \times \prod\limits_{j = 2}^{M_T} {\left( {1 - {Q_1}\left( {\sqrt {\frac{{2\rho _2^2t}}{{1 - \rho _2^2}}} ,\sqrt {\frac{{2{\gamma _{th}}}}{{1 - \rho _2^2}}} } \right)} \right)} \\
& \times \prod\limits_{i = 2}^{M_R} {\prod\limits_{j = 2}^{M_T} {\left( {1 - {Q_1}\left( {\sqrt {\frac{{2\rho _2^2\rho _1^2t}}{{1 - \rho _1^2\rho _2^2}}} ,\sqrt {\frac{{2{\gamma _{th}}}}{{1 - \rho _1^2\rho _2^2}}} } \right)} \right)} } dt.
\end{aligned}
\end{equation}

\begin{proof}
Substituting \(r_{11}=\dots=r_{M_RM_T}=\gamma_{th}\)  into the joint CDF in~\eqref{jcdf2} yields the OP and completes the proof.
\end{proof}
\end{theorem}

We consider the following special case as a sanity check.
\begin{corollary}

For the special case of $M_T=1$, the Dual-FAS model reduces to the Rx-SISO-FAS model analyzed in \cite{FAS}. To align with the setup in \cite{FAS}, we assume the correlation follows the Jakes model, the receive correlation coefficient is given by:
\begin{equation}\label{rou2}
\begin{aligned}
{\rho _i} & = {J_0}\left( {2\pi \frac{{\left( {i - 1} \right)W\lambda }}{{\left( {M_R - 1} \right)\lambda }}} \right)\\
 &= {J_0}\left( {\frac{{2\pi (i - 1)}}{{M_R - 1}}W} \right),\;{\rm{for}}\;i = 1,2, \ldots ,M_R,
\end{aligned}
\end{equation}
where $J_0(\cdot)$ is the zero-order Bessel function of the first kind. By substituting \eqref{rou2} and $M_T=1$ into \eqref{op2}, the resulting OP is given by:

\begin{equation}\label{rxFAS}
\begin{aligned}
&P_{\text{out}}\left( {{\gamma _{th}}} \right) = \int_0^{{\gamma _{th}}} {{e^{ - t}}}\\
& \times \prod \limits_{i = 2}^{M_R} {\left[ {1 - {Q_1}\left( {\sqrt {\frac{{2\rho _i^2t}}{{1 - \rho _i^2}} } ,\sqrt {\frac{{2}}{{1 - \rho _i^2}}} \sqrt {{\gamma _{th}}} } \right)} \right]} dt.
\end{aligned}
\end{equation}

This result is mathematically consistent with the OP expression of the Rx-SISO-FAS under Rayleigh fading as presented in \cite{FAS}.

\end{corollary}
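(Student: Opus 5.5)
The plan is to specialize Theorem~\ref{Theorem2:op} directly, since it already holds for arbitrary $M_R$, $M_T$, $\rho_1$ and $\rho_2$.

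\textbf{Step 1: set $M_T=1$.} With one transmit port, the index $j$ in \eqref{op2} ranges over the empty set $\{2,\dots,1\}$. So both the product over $j$ and the double product over $(i,j)$ are empty and equal one. What remains is $\int_0^{\gamma_{th}} e^{-t}\prod_{i=2}^{M_R}\bigl(1-Q_1(\cdot,\cdot)\bigr)\,dt$, with the receive-side factor only. This is the $M_R$-port Rx-SISO-FAS with the single fixed transmit antenna.

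\textbf{Step 2: replace the common correlation by port-dependent coefficients.} Equation \eqref{op2} was derived under the equal-correlation model, where every port $i\ge 2$ is tied to the reference $z_0$ with the same $\rho_1$. Under the Jakes model \eqref{rou2}, I would revisit the parametrization \eqref{para2} with $M_T=1$. There I would write $h_{i1}=\sqrt{1-\rho_i^2}\,z_{i1}+\rho_i z_0$, allowing a port-specific coefficient $\rho_i$ relative to port~1.

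Next I would check that the proofs of Lemmas~\ref{Lemma3:joint pdf} and \ref{Lemma4:joint cdf} go through unchanged. Conditioned on $z_0$ (equivalently on $|h_{11}|$), each $|h_{i1}|$ is Rician with its own parameters. The $h_{i1}$ are conditionally independent, so the conditional CDF factorizes. Each factor is $1-Q_1\bigl(\sqrt{2\rho_i^2 t/(1-\rho_i^2)},\sqrt{2 r_{i1}/(1-\rho_i^2)}\bigr)$. Hence the only change is $\rho_1\to\rho_i$ inside the $i$-th factor.

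Setting $r_{i1}=\gamma_{th}$ and writing $\sqrt{2\gamma_{th}/(1-\rho_i^2)}=\sqrt{2/(1-\rho_i^2)}\sqrt{\gamma_{th}}$ then gives \eqref{rxFAS}.

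\textbf{Step 3: compare with \cite{FAS}.} The last step is to match \eqref{rxFAS} against the OP of \cite{FAS}. That expression is derived in exactly this way: conditioning on the reference port, obtaining a product of Marcum-$Q$ terms, and integrating against the exponential density of $|h_1|^2$, with $\sigma^2$ normalized to one. I would verify that the normalizations agree, namely unit average power and the outage threshold applied to $|h|^2$.

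\textbf{Main obstacle.} The delicate point is Step~2. Strictly, the Jakes coefficients do not give an equally correlated matrix, and \eqref{para2} only reproduces correlation with the reference port. The substitution is therefore justified only under the same first-port-referenced approximation used in \cite{FAS}, and I would state this explicitly rather than claim the result for the exact Jakes covariance.
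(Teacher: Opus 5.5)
Your proposal is correct and follows the paper's route: the paper simply sets $M_T=1$ in \eqref{op2}, so that the $j$-products are empty, and substitutes the Jakes coefficients \eqref{rou2}. Your Step~2 re-runs the conditioning argument behind Lemma~\ref{Lemma4:joint cdf} with port-specific coefficients $\rho_i$, and you note that this reproduces only the first-port-referenced correlation structure of \cite{FAS}, not the exact Jakes covariance; this supplies the justification that the paper's formal replacement of the single $\rho_1$ by a port-dependent $\rho_i$ tacitly relies on.
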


\begin{corollary}\label{out_up2}
Based on the Rayleigh fading assumption and the Kronecker product correlation model, a closed-form upper bound on the OP of the Dual-FAS model is derived as:
\begin{equation}\label{outup2}
\begin{aligned}
&P_{\text{out}}\left( {{\gamma _{th}}} \right) \le {{\tilde P}_{\text{out}}}\left( {{\gamma _{th}}} \right) \\
&= \left( {1 - {e^{ - {\gamma _{th}}}}} \right){\left( {1 - {e^{ - \frac{{{\gamma _{th}}}}{{1 - \rho _1^2}}}}} \right)^{M_R-1}} \\
&\quad \times {\left( {1 - {e^{ - \frac{{{\gamma _{th}}}}{{1 - \rho _2^2}}}}} \right)^{M_T-1}}{\left( {1 - {e^{ - \frac{{{\gamma _{th}}}}{{1 - \rho _1^2\rho _2^2}}}}} \right)^{\left( {M_R-1} \right)\left( {M_T-1} \right)}}.
\end{aligned}
\end{equation}
\begin{proof}
See Appendix G.
\end{proof}
\end{corollary}

\begin{corollary}\label{out_low2}
Based on the Rayleigh fading assumption and the Kronecker product correlation model, a closed-form lower bound on the OP of the Dual-FAS model is derived as:
\begin{equation}\label{outlow2}
\begin{aligned}
&P_{\text{out}}\left( {{\gamma _{th}}} \right) \ge {{\hat P}_{\text{out}}}\left( {{\gamma _{th}}} \right) = \left( {1 - {e^{ - {\gamma _{th}}}}} \right) \\
&\times {\left( {1 - {Q_1}\left( {\sqrt {\frac{{2\rho _1^2{\gamma _{th}}}}{{1 - \rho _1^2}}} ,\sqrt {\frac{{2{\gamma _{th}}}}{{1 - \rho _1^2}}} } \right)} \right)^{M_R - 1}} \\
& \times {\left( {1 - {Q_1}\left( {\sqrt {\frac{{2\rho _2^2{\gamma _{th}}}}{{1 - \rho _2^2}}} ,\sqrt {\frac{{2{\gamma _{th}}}}{{1 - \rho _2^2}}} } \right)} \right)^{M_T - 1}} \\
& \times {\left( {1 - {Q_1}\left( {\sqrt {\frac{{2\rho _1^2\rho _2^2{\gamma _{th}}}}{{1 - \rho _1^2\rho _2^2}}} ,\sqrt {\frac{{2{\gamma _{th}}}}{{1 - \rho _1^2\rho _2^2}}} } \right)} \right)^{\left( {M_R - 1} \right)\left( {M_T- 1} \right)}}.
\end{aligned}
\end{equation}
\begin{proof}
See Appendix H.
\end{proof}
\end{corollary}

\begin{proposition}\label{proposition2: do2}
\emph{In the high-SNR regime, the diversity order of the Dual-FAS model can be derived from the OP upper bound in \textbf{Corollary~\ref{out_up2}} as:}
\begin{equation}\label{do2}
{d_{Dual}} =  - \mathop {\lim }\limits_{\frac{1}{{{\gamma _{th}}}} \to \infty } \frac{{\log {{P}_{out}}}}{{\log \frac{1}{{{\gamma _{th}}}}}} = M_RM_T.
\end{equation}
\begin{proof}
By substituting the upper bound of OP from \eqref{outup2} into the definition of diversity order in \eqref{do2}, we have the diversity order \eqref{dio1}, which is on the top of the next page.
\begin{figure*}
\begin{equation}\label{dio1}
\begin{aligned}
{d_{Dual}}& =  - \mathop {\lim }\limits_{\frac{1}{{{\gamma _{th}}}} \to \infty } \frac{{\log {P_{out}}}}{{\log \frac{1}{{{\gamma _{th}}}}}}\\
 & >  - \mathop {\lim }\limits_{\frac{1}{{{\gamma _{th}}}} \to \infty } \frac{{\log \left( {\left( {1 - {e^{ - {\gamma _{th}}}}} \right){{\left( {1 - {e^{ - \frac{{{\gamma _{th}}}}{{1 - \rho _1^2}}}}} \right)}^{M_R - 1}}{{\left( {1 - {e^{ - \frac{{{\gamma _{th}}}}{{1 - \rho _2^2}}}}} \right)}^{M_T - 1}}{{\left( {1 - {e^{ - \frac{{{\gamma _{th}}}}{{1 - \rho _1^2\rho _2^2}}}}} \right)}^{\left( {M_R - 1} \right)\left( {M_T - 1} \right)}}} \right)}}{{\log \frac{1}{{{\gamma _{th}}}}}}\\
  &\approx  - \mathop {\lim }\limits_{\frac{1}{{{\gamma _{th}}}} \to \infty } \frac{{\log \left( {{\gamma _{th}}} \right) + \left( {M_R - 1} \right)\log \left( {\frac{{{\gamma _{th}}}}{{1 - \rho _1^2}}} \right) + \left( {M_T - 1} \right)\log \left( {\frac{{{\gamma _{th}}}}{{1 - \rho _2^2}}} \right) + \left( {M_R - 1} \right)\left( {M_T - 1} \right)\log \left( {\frac{{{\gamma _{th}}}}{{1 - \rho _1^2\rho _2^2}}} \right)}}{{\log \frac{1}{{{\gamma _{th}}}}}}\\
& \approx M_RM_T.
\end{aligned}
\end{equation}
\hrulefill
\end{figure*}
Thus, the proof is concluded.
\end{proof}
\end{proposition}

\begin{remark}\label{remarkop2}
Based on the result in~\eqref{do2}, the Dual-FAS model can achieve a diversity order of $M_RM_T$. This indicates a significant performance enhancement compared to conventional single-antenna systems, as the system leverages spatial diversity from both the transmit and receive sides.
\end{remark}

\begin{figure}[t!]
\centering
\includegraphics[width=3.5in]{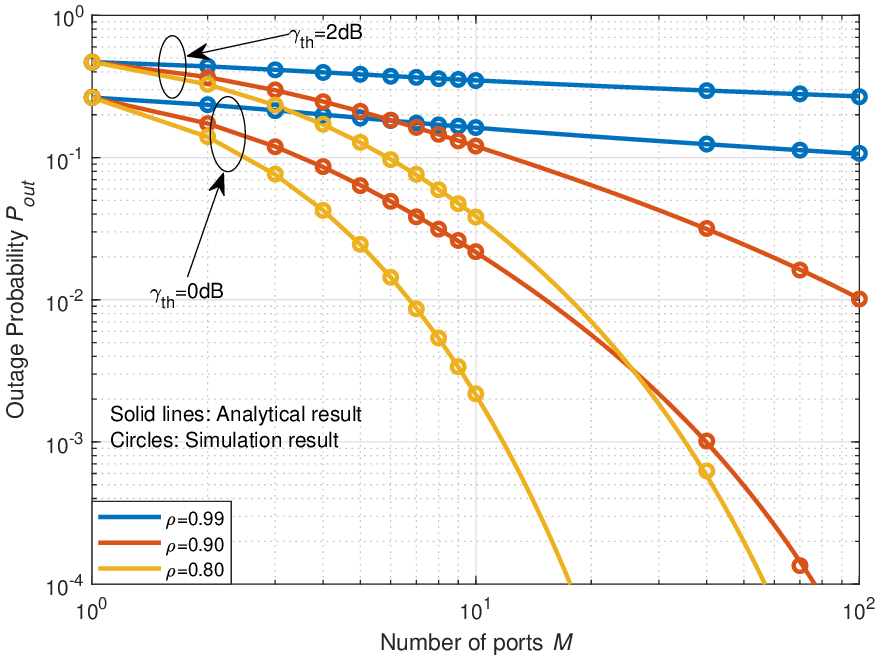}
\caption{The OP against the number of ports $M$ for 2-antenna MISO-FAS}\label{ports}
\vspace{-3mm}
\end{figure}

\section{NUMERICAL RESULTS}\label{sec:simu}
This section provides simulation results to assess the performance of MISO-FAS and Dual-FAS in Rayleigh fading channels. All analytical findings are validated via Monte Carlo simulations, averaged over $ 1 \times 10^6 $ independent trials.

\subsection{IMPACT OF THE NUMBER OF PORTS ON THE OP OF MISO-FAS}

In Fig.~\ref{ports}, the OP of MISO-FAS is analyzed as a function of the number of ports under different correlation factors and SNR thresholds. The analytical results (solid lines) show a close match with the Monte Carlo simulations (circles), validating the accuracy of our theoretical model. The OP decreases with the increase in the number of ports under different SNR thresholds. Moreover, the OP curve is observed to exhibit a steeper slope with a larger number of ports. In addition, weaker spatial correlation leads to a faster decrease in OP. When the correlation factor \(\rho = 0.99\), increasing the number of ports by 100 results in a reduction in OP of less than an order of magnitude. In contrast, when \(\rho = 0.8\), the OP decreases by more than two orders of magnitude ($\gamma_{th}=0$ dB). This implies that lower spatial correlation allows the system to exploit higher spatial diversity gains from the additional ports.

\begin{figure}[t!]
\centering
\includegraphics[width=3.5in]{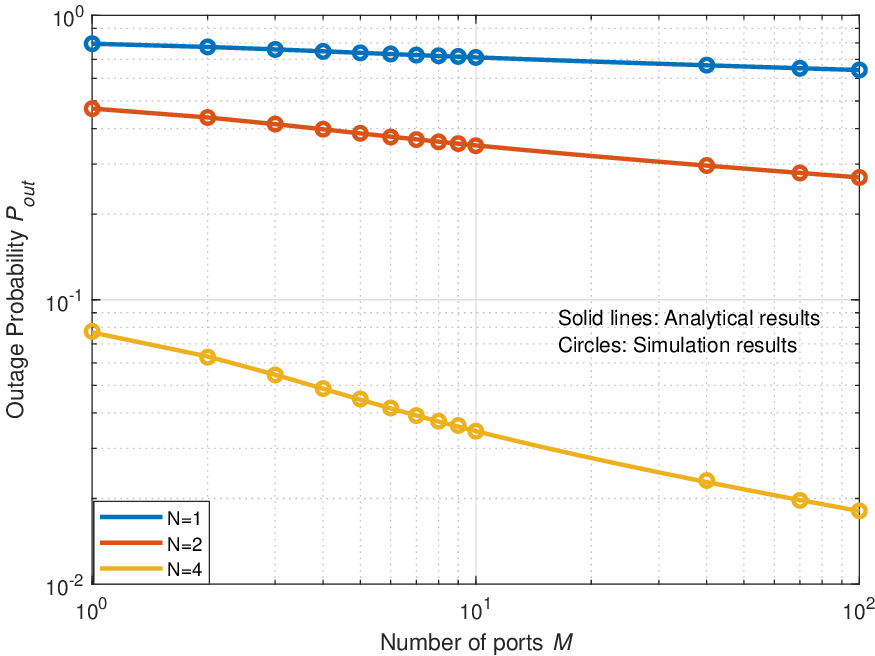}
\caption{The OP against the number of ports $M$ for $N$-antenna MISO-FAS when the correlation factor $\rho=0.99$.}\label{Ns}
\vspace{-3mm}
\end{figure}

\subsection{IMPACT OF THE NUMBER OF ANTENNAS ON THE OP OF MISO-FAS}
In Fig.~\ref{Ns}, we investigate how the OP of the MISO-FAS varies with the number of ports under different numbers of antennas $N$. As expected, for any given number of antennas, the OP consistently decreases as the number of ports $M$ increases. A key observation, however, is the substantial performance gain afforded by increasing the number of antennas. For instance, at $M=10$ ports, increasing the number of antennas from $N=1$ to $N=4$ reduces the OP from approximately 0.7 to $4 \times 10^{-2}$. 

Furthermore, the curves with a larger number of antennas exhibit steeper slopes, indicating that the MISO-FAS system benefits more significantly from additional ports when equipped with more antennas.
In conclusion, these results demonstrate that even under high spatial correlation, deploying multiple antennas significantly enhances the reliability gains provided by the FAS.

\begin{figure}[t!]
\centering
\includegraphics[width=3.5in]{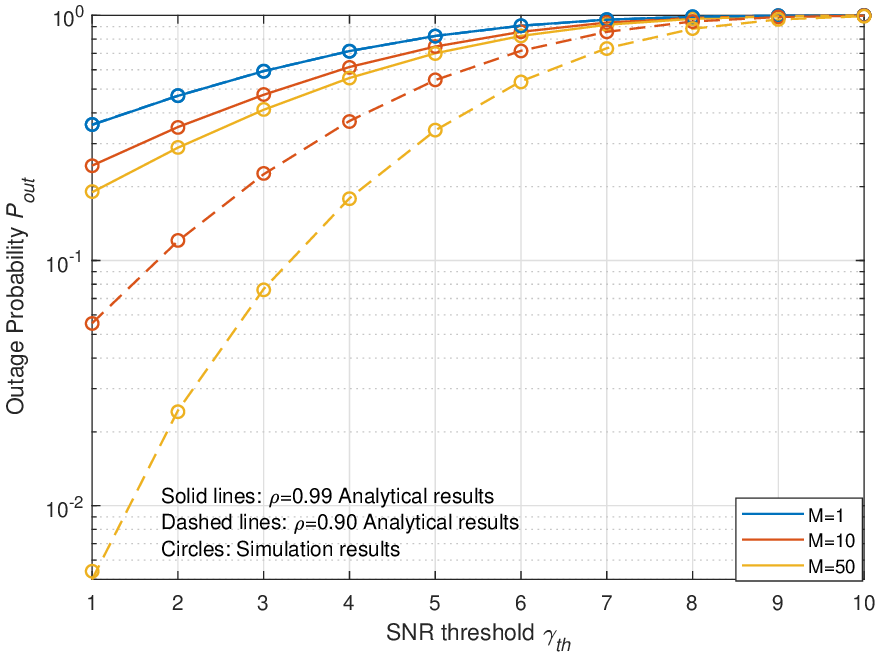}
\caption{The OP against the SNR threshold $\gamma_{th}$ for $2$-antenna MISO-FAS when the correlation factor $\rho=0.99$.}\label{SNR}
\vspace{-3mm}
\end{figure}

\subsection{IMPACT OF SNR THRESHOLD ON THE OP OF MISO-FAS} 
In Fig.~\ref{SNR}, we investigate the OP of a 2-antenna MISO-FAS versus the SNR threshold $\gamma_{th}$, considering different numbers of ports $M$ and two channel correlation factors $\rho$. When the number of ports is $M = 1$, the fluid antenna reduces to a fixed-position antenna, and the correlation coefficient has no impact on system performance. Consequently, the two blue curves in Fig.~\ref{SNR} ($\rho = 0.99$ and $\rho = 0.90$) completely overlap. As anticipated, for any given configuration, the OP increases as the SNR threshold increases. 
Additionally, reducing the correlation factor leads to a sharp decrease in OP. For instance, with $M=50$ ports and an SNR threshold of $\gamma_{th}=3$, the OP falls from approximately 0.4 (at $\rho=0.99$) to below 0.1 (at $\rho=0.90$). Notably, the slope of the OP curves becomes steeper as the number of ports increases, which verifies the analysis of \textbf{Remark~\ref{remarkop}}. 

\begin{figure}[t!]
\centering
\includegraphics[width=3.5in]{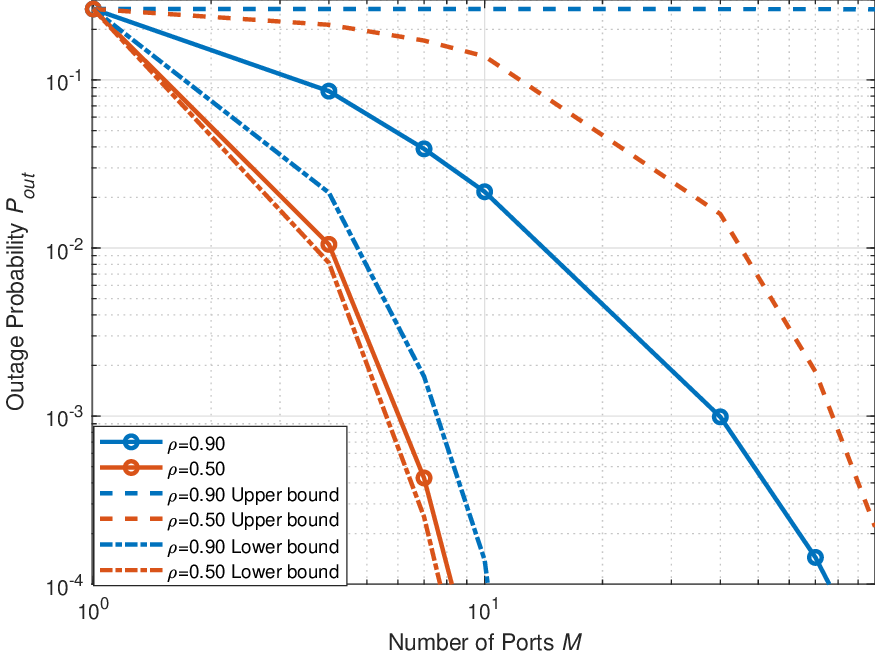}
\caption{The OP and its bounds for 2-antenna MISO-FAS when the SNR threshold $\gamma_{th}=0$dB under different correlation factor $\rho$.}\label{MISOUL}
\vspace{-3mm}
\end{figure}

\subsection{UPPER AND LOWER BOUNDS ON THE OP OF MISO-FAS}
 In Fig.~\ref{MISOUL}, we present the upper and lower bounds for the OP of a 2-antenna MISO-FAS, plotted against the number of ports for different correlation factors at a fixed SNR threshold of $\gamma_{th}=0$ dB. The solid lines with circles represent the exact OP, which serves as a benchmark. The primary observation is that for both correlation scenarios, the exact OP curves are strictly bounded by their corresponding upper and lower bounds, thereby validating the accuracy of our theoretical derivations. 
 Furthermore, the system with lower correlation $\rho=0.50$ exhibits a significantly steeper slope, indicating a superior outage performance compared to the high-correlation case. It is also noteworthy that the bounds for the lower correlation scenario are exceptionally tight, closely tracking the exact performance, whereas the bounds for the high-correlation case are less tight. This finding suggests that the lower bound serves as a close approximation for the OP under low channel correlation. In contrast, the upper bound provides a more conservative performance estimate across all conditions.

\begin{figure}[t!]
\centering
\includegraphics[width=3.5in]{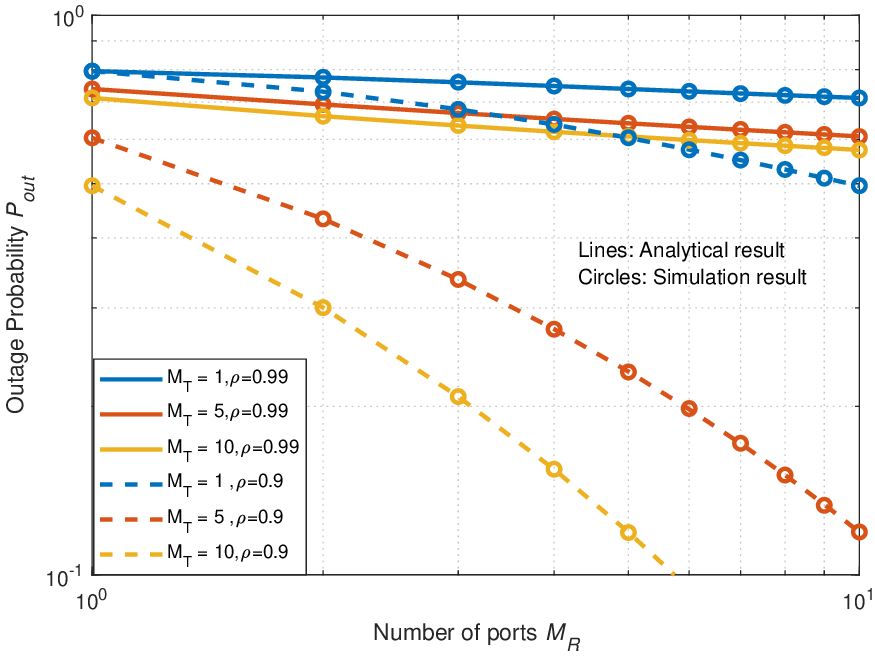}
\caption{The OP against the number of RPs $M_R$ for Dual-FAS}\label{dualm}
\vspace{-3mm}
\end{figure}

\subsection{IMPACT OF THE NUMBER OF RPS ON THE OP OF DUAL-FAS} 
In Fig.~\ref{dualm}, we study the OP of the Dual-FAS as a function of the number of RPs, considering different numbers of TPs and correlation factors. The analytical results show a close match with the Monte Carlo simulations, affirming the accuracy of our theoretical model. It is observed that a significant decrease in correlation from $\rho=0.99$ (solid lines) to $\rho=0.9$ (dashed lines) leads to a reduction in OP. For example, with $M_T=10$ and $M_R=3$, the OP decreases from approximately 0.6 to 0.12 as the correlation factor reduces. Moreover, for a fixed correlation, increasing the number of TPs $M_T$ and RPs $M_R$ consistently enhances system performance. The results further indicate that the OP curve for $\rho=0.9$ is steeper than that for $\rho=0.99$. This is because lower spatial correlation enables the system to exploit higher diversity gains by increasing the number of RPs.

\begin{figure}[t!]
\centering
\includegraphics[width=3.5in]{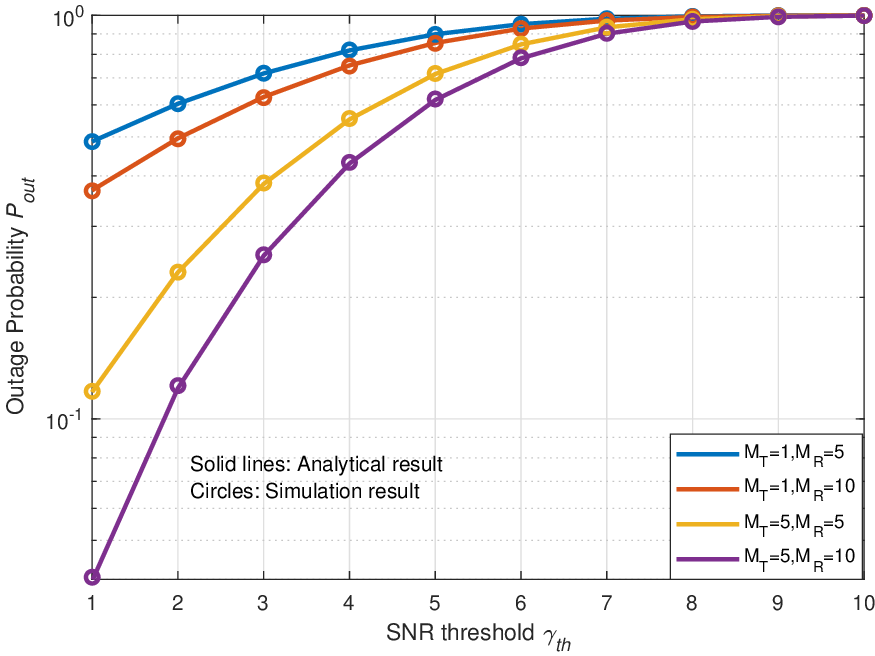}
\caption{The OP of Dual-FAS against the SNR threshold $\gamma_{th}$ with different correlation factor $\rho=0.9$.}\label{dualSNR}
\vspace{-3mm}
\end{figure}

\subsection{IMPACT OF SNR THRESHOLD ON THE OP OF DUAL-FAS} 
Fig.~\ref{dualSNR} illustrates the OP of the Dual-FAS versus the SNR threshold for different numbers of ports, under a fixed correlation of $\rho=0.9$. As expected, the OP increases monotonically with the SNR threshold for all configurations. The slope of the curves becomes steeper for configurations with more ports, confirming the analysis presented in \textbf{Remark~\ref{remarkop2}}. The key insight is the significant performance improvement achieved by deploying FAS at both the transmitter and receiver sides. For example, with the SNR threshold of $\gamma_{th}=4$ dB, increasing the number of TPs from $M_T=1$ to $M_T=5$ (for $M_R=10$) reduces the OP from approximately 0.75 to 0.45. Similarly, the system with more ports consistently outperforms the one with fewer ports. The effect is most pronounced in the $M_T=5$, $M_R=10$ configuration, which exhibits the best performance. In conclusion, the key insight from this figure is that increasing the number of ports provides a direct and effective approach to enhancing the performance.

\begin{figure}[t!]
\centering
\includegraphics[width=3.5in]{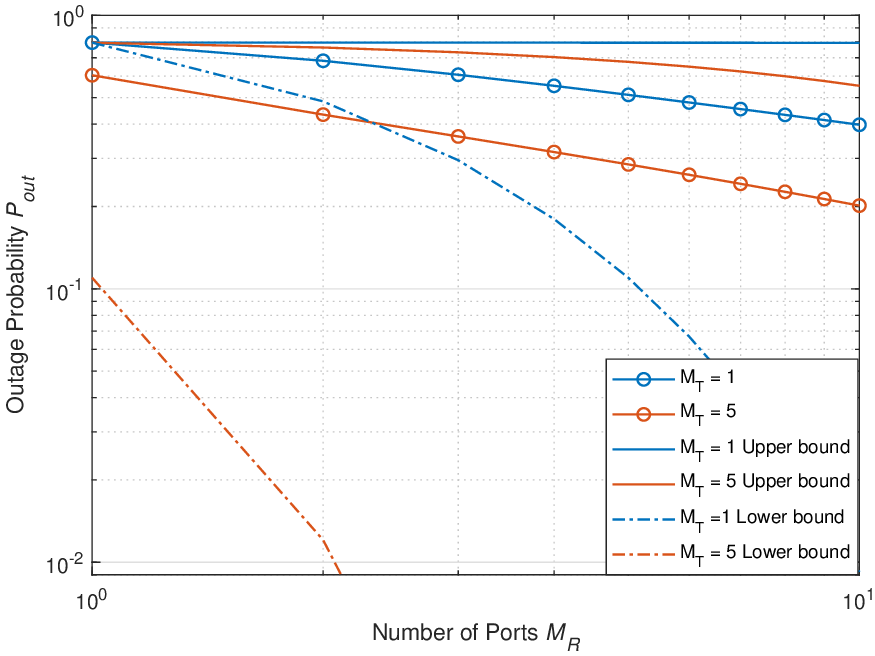}
\caption{The upper and lower bounds of the OP of Dual-FAS with the SNR threshold $\gamma_{th}=0$dB and correlation factor $\rho=0.99$.}\label{dualul}
\vspace{-3mm}
\end{figure}

\subsection{UPPER AND LOWER BOUNDS ON THE OP OF DUAL-FAS}
In Fig.~\ref{dualul}, we examine the upper and lower bounds on the OP of the Dual-FAS system as a function of the number of RPs, under a high correlation factor of $\rho = 0.99$ and a SNR threshold of $\gamma_{th} = 0$ dB. As the number of RPs increases, the lower bound effectively captures the steep diversity slope, particularly for the $M_T=5$ case. This indicates the potential for significant performance improvement, whereas the upper bound provides a conservative performance limit. In summary, this figure validates our analytical results by showing that the derived bounds accurately characterize the outage performance of the Dual-FAS system.

\begin{figure}[t!]
\centering
\includegraphics[width =3.5in]{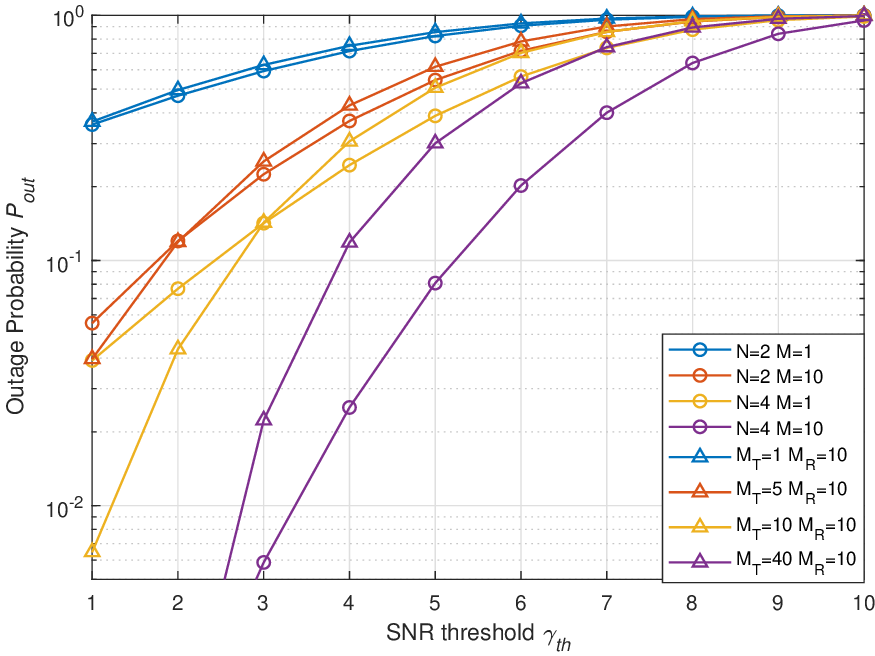}
\caption{The OP of MISO-FAS and Dual-FAS with different number of ports and antennas with correlation factor $\rho=0.9$.}\label{FASMRC}
\end{figure}

\subsection{PERFORMANCE COMPARISON OF MISO-FAS AND DUAL-FAS}
In Fig.~\ref{FASMRC}, we provide a comparative analysis of the OP for MISO-FAS (circles) and Dual-FAS (triangles) as a function of the SNR threshold, under a correlation factor of $\rho = 0.9$. As expected, for all configurations, the OP increases as the SNR threshold increases. The central observation is the clear performance enhancement gained by increasing the number of ports in both models. Given the same number of ports at the receiver ($M_R=M$), the MISO-FAS model exhibits better performance in the low-SNR regime. However, the key advantage of Dual-FAS over MISO-FAS is its ability to leverage an increased number of ports for performance gains with fewer RF chains. Moreover, for configurations of the two models that achieve comparable performance in the low-SNR region (curves of the same color), the slope of the Dual-FAS curve is steeper than that of MISO-FAS as the SNR threshold decreases, indicating that in the high-SNR regime, the performance of Dual-FAS surpasses that of MISO-FAS.

\section{CONCLUSIONS}\label{sec:conclude}
This paper presented a comprehensive performance analysis of MISO-FAS and Dual-FAS systems over Rayleigh fading channels. For both models, we derived the exact OP expressions, along with their corresponding upper and lower bounds. It was found that the diversity order of MISO-FAS is determined by the sum of the number of antennas and ports $M+N$, while that of Dual-FAS is dictated by the product of the numbers of ports at both ends $M_RM_T$. Furthermore, comparative numerical results demonstrated that the system performance of both models improves significantly as spatial correlation decreases. Finally, while this study highlights the substantial gains offered by FAS, many aspects of this emerging technology remain open for future investigation.

\numberwithin{equation}{section}
\section*{Appendix~A: Proof of Lemma~\ref{Lemma1:joint pdf}}\label{Appendix:As}
\renewcommand{\theequation}{A.\arabic{equation}}
\setcounter{equation}{0}
Conditioned on $x_0,y_0$, $\left|h_{mn}\right|$ follows a Rician distribution, and we have
\begin{equation}\label{conpdf}
\begin{aligned}
p_{\left| {{h_{mn}}} \right|\left|{{x_0},{y_0}}\right.}({r_{mn}}) &= \frac{{2{r_{mn}}}}{{{\sigma ^2}(1 - {\rho ^2})}}{e^{ - \frac{{r_{mn}^2 + {\rho ^2} {\left( {x_0^2 + y_0^2} \right)} }}{{{\sigma ^2}(1 - {\rho ^2})}}}} \\
&\times {I_0}\left( {\frac{{2\rho \sqrt{ {x_0^2 + y_0^2} }{r_{mn}}}}{{{\sigma ^2}(1 - {\rho ^2})}}} \right).
\end{aligned}
\end{equation}
According to \eqref{conpdf}, $\left|h_{mn}\right|$ is only correlated to $\left|h_{1n}\right|$. By performing a variable transformation, the conditioned PDF of $\left|h_{mn}\right|^2$ can be obtained as
\begin{equation}\label{conpdfsquare}
\begin{aligned}
{p_{{{\left| {{h_{mn}}} \right|}^2}\left| {{{\left| {{h_{1n}}} \right|}^2}} \right.}}({r_{mn}}) &= \frac{1}{{{\sigma ^2}(1 - {\rho ^2})}}{e^{ - \frac{{{r_{mn}} + {\rho ^2}{r_{1n}}}}{{{\sigma ^2}(1 - {\rho ^2})}}}}\\
& \times {I_0}\left( {\frac{{2\left( {\sqrt {{\rho ^2}{r_{1n}}{r_{mn}}} } \right)}}{{{\sigma ^2}(1 - {\rho ^2})}}} \right).
\end{aligned}
\end{equation}
Based on \eqref{conpdfsquare}, conditioned on $\left|h_{1n}\right|^2$, $\left|h_{mn}\right|^2$ follows a noncentral Chi-square distribution with DoF $k=2$ and noncentral factor ${\lambda _n} = \frac{{2{\rho ^2}r_{1n}^2}}{{{\sigma ^2}(1 - {\rho ^2})}}$.

Thus, conditioned on $\left|h_{11}\right|^2, \dots, \left|h_{1N}\right|^2$, $X_m$ follows noncentral Chi-square distribution with DoF $k=2N$ and noncentral factor $\lambda_{MRT}=\frac{{2{\rho ^2}}}{{{\sigma ^2}(1 - {\rho ^2})}}\sum\limits_{n = 1}^N {r_{1n}^2} $, and we have
\begin{equation}\label{2pdf}
\begin{aligned}
{p_{{X_m}\left| {{{\left| {{h_{11}}} \right|}^2}, \ldots ,{{\left| {{h_{1N}}} \right|}^2}} \right.}}\left( {{z_m}} \right) &=\frac{1}{\sigma^2}{e^{ - \frac{{{\frac{2z_m}{\sigma^2}} + {\lambda _{MRT}}}}{2}}}{\left( {\frac{{{2z_m}}}{{{\sigma^2\lambda _{MRT}}}}} \right)^{\frac{{N - 1}}{2}}}\\
&\times{I_{N - 1}}\left( {\sqrt {\frac{{2\lambda _{MRT}}{z_m}}{\sigma^2}} } \right).
\end{aligned}
\end{equation}

According to \eqref{2pdf}, $X_m$ only depends on $X_1$.
Since  $|h_{11}|$ and $|h_{12}|$ both follow a Rayleigh distribution, \( X_1 \) follows a gamma distribution with DoF $k=2$. Thus, the PDF of $X_1$ can be written as:
\begin{equation}\label{A4}
{f_{{Z_1}}}\left( {{z_1}} \right) = \frac{{{z_1}^{N-1}}}{{{\sigma ^{2N}\left(N-1\right)!}}}{e^{ - \frac{{{z_1}}}{{{\sigma ^2}}}}}.
\end{equation}
Based on \eqref{2pdf} and \eqref{A4}, the joint PDF of ${X_1},{X_2},...,{X_M}$ is given by:
\begin{equation}\label{jpdf0}
\begin{aligned}
&{p_{{X_1},{X_2},...,{X_M}}}\left( {{z_1},{z_2}, \ldots {z_M}} \right)\\
 &={f_{{Z_1}}}\left( {{z_1}} \right)\prod\limits_{m = 2}^M {{p_{{X_m}\left| {{X_1}} \right.}}} \left( {{z_m}} \right)\\
& =\frac{{{z_1}^{N-1}}}{{{\sigma ^{2N}\left(N-1\right)!}}}{e^{ - \frac{{{z_1}}}{{{\sigma ^2}}}}}\times\\
& \prod\limits_{m = 2}^M {\left( \frac{{e^{ - \frac{{{z_m} + {\rho ^2}{z_1}}}{{{\sigma ^2}(1 - {\rho ^2})}}}}}{\sigma^2}{{{\left( {\frac{{{z_m}(1 - {\rho ^2})}}{{{\rho ^2}{z_1}}}} \right)}^{\frac{N-1}{2}}}{I_{N-1}}\left( \sqrt{\frac{{ {4{\rho ^2}{z_1}{z_m}} }}{{1 - {\rho ^2}}}} \right)} \right)} .
\end{aligned}
\end{equation}
Thus, the proof is completed.

\vspace{-0.1in}
\numberwithin{equation}{section}
\section*{Appendix~B: Proof of Lemma~\ref{Lemma2:joint CDF}}\label{Appendix:Bs}
\renewcommand{\theequation}{B.\arabic{equation}}
\setcounter{equation}{0}

According to the joint PDF in \eqref{jpdf}, the joint CDF can be derived as
\begin{equation}\label{B1}
\begin{aligned}
&{F_{{X_1},{X_2},...,{X_M}}}({z_1},{z_2},...,{z_M})\\
 &= P\left( {{X_1} < {z_1},{X_2} < {z_2}, \ldots ,{X_M} < {z_M}} \right)\\
 &= \int_0^{{z_1}} { \cdots \int_0^{{z_M}} {{p_{{X_1},{X_2},...,{X_M}}}\left( {{t_1},{t_2}, \ldots {t_M}} \right)d{t_1},{t_2}, \ldots {t_M}} } \\
 &=\int_0^{{z_1}} { \cdots \int_0^{{z_M}} \frac{{{t_1}}}{{{\sigma ^2}}}{e^{ - \frac{{{t_1}}}{{{\sigma ^2}}}}}}\times\\
 &\prod\limits_{m = 2}^M {\left( {{e^{ - \frac{{{t_m} + {\rho ^2}{t_1}}}{{{\sigma ^2}(1 - {\rho ^2})}}}}{{\left( {\frac{{{t_m}{\sigma ^2}(1 - {\rho ^2})}}{{{\rho ^2}{t_1}}}} \right)}^{\frac{1}{2}}}{I_{N-1}}\left( {\frac{{\sqrt {4{\rho ^2}{t_1}{t_m}} }}{{{\sigma ^2}(1 - {\rho ^2})}}} \right)} \right)}\\
 &d{t_1}{t_2} \ldots {t_M} .
\end{aligned}
\end{equation}
The integral over the chi-square PDF inside the product operator has a closed-form solution in terms of the Marcum-Q function, given by \cite{marcumq}
\begin{equation}\label{B2}
F\left( z \right) = 1 - {Q_{\frac{k}{2}}}\left( {\sqrt {{\lambda _{MRT}}} ,\sqrt z } \right).
\end{equation}
By substituting \eqref{B2} into \eqref{B1}, we have
\begin{equation}\label{B3}
\begin{aligned}
&{F_{{X_1},{X_2},...,{X_M}}}({z_1},{z_2},...,{z_M})\\
 &= \int_0^{{z_1}} \frac{{{t_1}^{N-1}}}{{{\sigma ^{2N}\left(N-1\right)!}}}{e^{ - \frac{{{t_1}}}{{{\sigma ^2}}}}} \\
 &\times\prod\limits_{m = 2}^M {\left( {1 - {Q_{{N}}}\left( {\sqrt {\frac{{2{\rho ^2} {t_1} }}{{{\sigma ^2}(1 - {\rho ^2})}}} ,\sqrt {\frac{{2z_1}}{{{\sigma ^2}(1 - {\rho ^2})}}} } \right)} \right)} d{t_1}.
\end{aligned}
\end{equation}
Thus, the proof is completed.

\numberwithin{equation}{section}
\section*{Appendix~C: Proof of Corollary~\ref{out_up1}}\label{Appendix:Cs}
\renewcommand{\theequation}{C.\arabic{equation}}
\setcounter{equation}{0}
To derive the upper bound on the OP of MISO-FAS, our first step is to determine the OP reduction from the $M$-th additional port, as expressed in \eqref{opred}, which is on the top of the next page.
\begin{figure*}
\begin{equation}\label{opred}
\begin{aligned}
\Delta {P_{out}} &= P_{out}^{M - 1} - P_{out}^M\\
 &= \int_0^{{\gamma _{th}}} {{Q_N}\left( {\sqrt {\frac{2{\rho ^2{t_1}}}{{{\sigma ^2}(1 - {\rho ^2})}}} ,\sqrt {\frac{2{\gamma _{th}}}{{{\sigma ^2}(1 - {\rho ^2})}}} } \right)\frac{{{t_1}^{N-1}}}{{{\sigma ^{2N}\left(N-1\right)!}}}{e^{ - \frac{{{t_1}}}{{{\sigma ^2}}}}}{\left( {1 - {Q_N}\left( {\sqrt {\frac{2{\rho ^2{t_1}}}{{{\sigma ^2}(1 - {\rho ^2})}}} ,\sqrt {\frac{2{\gamma _{th}}}{{{\sigma ^2}(1 - {\rho ^2})}}} } \right)} \right)^{M - 2}}}d{t_1}.
\end{aligned}
\end{equation}
\hrulefill
\end{figure*}

The Cauchy-Schwarz lower bound of the Marcum-Q function is given by
\begin{equation}\label{C2}
{Q_v}\left( {a,b} \right) \ge {e^{ - \frac{{{b^2}}}{2}}},~\frac{a}{b} < 1.
\end{equation}
For $0<t<\gamma_{th}$, we have 
\begin{equation}\label{C3}
{Q_N}\left( {\sqrt {\frac{2{{\rho ^2}{t_1}}}{{{\sigma ^2}(1 - {\rho ^2})}}} ,\sqrt {\frac{{2{\gamma _{th}}}}{{{\sigma ^2}(1 - {\rho ^2})}}} } \right) \ge  {{e^{ - \frac{{2{\gamma _{th}}}}{{{\sigma ^2}(1 - {\rho ^2})}}}}} .
\end{equation}
By substituting \eqref{C3} into \eqref{opred}, we have
\begin{equation}\label{C4}
\Delta {P_{out}} = P_{out}^{M - 1} - P_{out}^M > {e^{ - \frac{{2{\gamma _{th}}}}{{{\sigma ^2}(1 - {\rho ^2})}}}}P_{out}^{M - 1}.
\end{equation}
Thus, the upper bound on the OP of MISO-FAS can be expressed as
\begin{equation}\label{C5}
\begin{aligned}
P_{out}^M &< \left( {1 - {e^{ - \frac{{2{\gamma _{th}}}}{{{\sigma ^2}(1 - {\rho ^2})}}}}} \right)P_{out}^{M - 1}\\
 &< {\left( {1 - {e^{ - \frac{{2{\gamma _{th}}}}{{{\sigma ^2}(1 - {\rho ^2})}}}}} \right)^{M - 1}}P_{out}^1.
\end{aligned}
\end{equation}
To complete the proof, we substitute the expression from \eqref{op} for the special case of $M=1$ into \eqref{C5}.

\section*{Appendix~D: Proof of Corollary~\ref{out_low1}}\label{Appendix:Ds}
\renewcommand{\theequation}{D.\arabic{equation}}
\setcounter{equation}{0}
To derive the lower bound on the OP of MISO-FAS, we first further express the OP reduction in \eqref{opred} as:
\begin{equation}\label{D1}
\begin{aligned}
&\Delta {P_{out}}= P_{out}^{ {{M} - 1} } - P_{out}^{{M}} \\
&=\left. {{Q_N}\left( {\sqrt {\frac{{2\rho ^2{t_1}}}{{{\sigma ^2}(1 - {\rho ^2})}}} ,\sqrt {\frac{{2{\gamma _{th}}}}{{{\sigma ^2}(1 - {\rho ^2})}}} } \right)P_{out}^{M-1}} \right|_0^{{\gamma _{th}}}\\
&- \int_0^{{\gamma _{th}}} {{{Q'}_N}\left( {\sqrt {\frac{{{2\rho ^2}{t_1}}}{{{\sigma ^2}(1 - {\rho ^2})}}} ,\sqrt {\frac{{2{\gamma _{th}}}}{{{\sigma ^2}(1 - {\rho ^2})}}} } \right)P_{out}^{M-1}}dt_1 \\
& \le {Q_N}\left( {\sqrt {\frac{{{2\rho ^2}{\gamma _{th}}}}{{{\sigma ^2}(1 - {\rho ^2})}}} ,\sqrt {\frac{2{\gamma _{th}}}{{{\sigma ^2}(1 - {\rho ^2})}}}} \right)P_{out}^{M-1}.
\end{aligned}
\end{equation}

After proving the second step in \eqref{D1} using the monotonicity of the Marcum Q-function and integration by parts from \cite{integral}, we can express the lower bound on the OP as:
\begin{equation}\label{D2}
\begin{aligned}
&P_{out}^M \ge \left( {1 - {Q_N}\left( {\sqrt {\frac{{2\rho ^2{\gamma _{th}}}}{{{\sigma ^2}(1 - {\rho ^2})}}} ,\sqrt {\frac{2{\gamma _{th}}}{{{\sigma ^2}(1 - {\rho ^2})}}}} \right)} \right)P_{out}^{M - 1}\\
 &= {\left( {1 - {Q_N}\left( {\sqrt {\frac{{2\rho^2{\gamma _{th}}}}{{{\sigma ^2}(1 - {\rho ^2})}}} ,\sqrt {\frac{2{\gamma _{th}}}{{{\sigma ^2}(1 - {\rho ^2})}}}} \right)} \right)^{M - 1}}P_{out}^1.
\end{aligned}
\end{equation}
As the final step, we substitute the expression from \eqref{op} for the case $M=1$ into \eqref{D2}, which completes the proof.

\section*{Appendix~E: Proof of Lemma~\ref{Lemma3:joint pdf}}\label{Appendix:Es}
\renewcommand{\theequation}{E.\arabic{equation}}
\setcounter{equation}{0}
We first adopt a similar approach to that in Appendix A to obtain the conditional PDF of $\left|h_{ij}\right|$, given by:
\begin{equation}\label{E1}
\begin{aligned}
{p_{\left| {{h_{ij}}} \right|\left| {{h_{11}}} \right.}}\left( {{r_{ij}}} \right)&=\frac{{2{r_{ij}}}}{{{\sigma ^2}(1 - \rho _1^2\rho _2^2)}} \exp \left( { - \frac{{r_{ij}^2 + \rho _1^2\rho _2^2r_{11}^2}}{{{\sigma ^2}(1 - \rho _1^2\rho _2^2)}}} \right)\\
&\times{I_0}\left( {\frac{{2{\rho _1}{\rho _2}{r_{ij}}{r_{11}}}}{{{\sigma ^2}(1 - \rho _1^2\rho _2^2)}}} \right).
\end{aligned}
\end{equation}
Thus, conditioned on $\left|h_{11}\right|$, the joint PDF of $\left|h_{12}\right|, \dots, \left|h_{M_RM_T}\right|$ is given by
\begin{equation}\label{E2}
\begin{aligned}
&{p_{\left| {{h_{12}}} \right| \ldots \left| {{h_{M_RM_T}}} \right|\left| {\left| {{h_{11}}} \right|} \right.}}\left( {{r_{12}}, \ldots {r_{M_RM_T}}} \right) = \\
& \times \prod\limits_{i = 2}^{M_R} {\frac{{2{r_{i1}}}}{{{\sigma ^2}(1 - \rho _1^2)}}} \exp \left( { - \frac{{r_{i1}^2 + \rho _1^2r_{11}^2}}{{{\sigma ^2}(1 - \rho _1^2)}}} \right){I_0}\left( {\frac{{2{\rho _1}{r_{11}}{r_{i1}}}}{{{\sigma ^2}(1 - \rho _1^2)}}} \right)\\
& \times \prod\limits_{j = 2}^{M_T} {\frac{{2{r_{1j}}}}{{{\sigma ^2}(1 - \rho _2^2)}}} \exp \left( { - \frac{{r_{1j}^2 + \rho _2^2r_{11}^2}}{{{\sigma ^2}(1 - \rho _2^2)}}} \right){I_0}\left( {\frac{{2{\rho _2}{r_{11}}{r_{1j}}}}{{{\sigma ^2}(1 - \rho _2^2)}}} \right)\\
& \times \prod\limits_{i = 2}^{M_R} {\prod\limits_{j = 2}^{M_T} {\frac{{2{r_{ij}}}}{{{\sigma ^2}(1 - \rho _1^2\rho _2^2)}}} } \exp \left( { - \frac{{r_{ij}^2 + \rho _1^2\rho _2^2r_{11}^2}}{{{\sigma ^2}(1 - \rho _1^2\rho _2^2)}}} \right)\\
&\times{I_0}\left( {\frac{{2{\rho _1}{\rho _2}{r_{ij}}{r_{11}}}}{{{\sigma ^2}(1 - \rho _1^2\rho _2^2)}}} \right).
\end{aligned}
\end{equation}
Finally, by multiplying \eqref{Rayleigh dis} with \eqref{E2}, the joint PDF of $\left|h_{11}\right|, \dots, \left|h_{M_RM_T}\right|$ in \eqref{jpdf2} is obtained, which completes the proof.

\section*{Appendix~F: Proof of Lemma~\ref{Lemma4:joint cdf}}\label{Appendix:Fs}
\renewcommand{\theequation}{F.\arabic{equation}}
\setcounter{equation}{0}
Based on the joint PDF in \eqref{jpdf2}, the joint CDF of ${\left| {{h_{11}}} \right|^2,\left| {{h_{12}}} \right|^2, \ldots ,\left| {{h_{M_RM_T}}} \right|^2}$ can be derived as
\begin{equation}\label{F1}
\begin{aligned}
&{F_{{\left| {{h_{11}}} \right|^2,\left| {{h_{12}}} \right|^2,\dots,\left| {{h_{M_RM_T}}} \right|^2}}}({r_{11}},{r_{12}},...,{r_{M_RM_T}})\\
 &= P\left( {\left| {{h_{11}}} \right|^2 < {r_{11}},\left| {{h_{12}}} \right|^2 < {r_{12}},\dots,\left| {{h_{M_RM_T}}} \right|^2< {r_{M_RM_T}}} \right)\\
 & = \int_0^{{r_{11}}} {{{\mathop{\rm e}\nolimits} ^{ - {t_{11}}}}} \\
 &\times \prod\limits_{i = 2}^{{M_R}} {\int_0^{{r_{i1}}} {\exp \left( { - \frac{{{t_{i1}} + \rho _1^2{t_{11}}}}{{1 - \rho _1^2}}} \right){I_0}\left( {\frac{{2{\rho _1}{r_{11}}{r_{i1}}}}{{1 - \rho _1^2}}} \right)} } d{t_{i1}}\\
& \times \prod\limits_{j = 2}^{{M_T}} {\int_0^{{r_{1j}}} {\exp \left( { - \frac{{{t_{ij}} + \rho _2^2{t_{11}}}}{{1 - \rho _2^2}}} \right){I_0}\left( {\frac{{2{\rho _2}{r_{11}}{r_{1j}}}}{{1 - \rho _2^2}}} \right)} } d{t_{1j}}\\
& \times \prod\limits_{i = 2}^{{M_R}} {\prod\limits_{j = 2}^{{M_T}} {\int_0^{{r_{ij}}} {\exp \left( { - \frac{{t_{ij} + \rho _1^2\rho _2^2t_{11}}}{{1 - \rho _1^2\rho _2^2}}} \right)} } } \\
&\times{I_0}\left( {\frac{{2{\rho _1}{\rho _2}{r_{ij}}{r_{11}}}}{{1 - \rho _1^2\rho _2^2}}} \right)d{t_{ij}}d{t_{11}}.
\end{aligned}
\end{equation}
The integral over the Rician PDF inside the product operator has a closed-form solution in terms of the Marcum-Q function, given by:
\begin{equation}\label{F2}
\begin{aligned}
F &= 1 - {Q_1}\left( {\frac{v}{{{\sigma _0}}},\frac{r}{{{\sigma _0}}}} \right)\\
&= 1 - {Q_1}\left( {\sqrt {\frac{{2{\rho _1^2\rho _2^2r_{11}^2}t_1^2}}{{{\sigma ^2}\left( {1 - \rho _1^2\rho _2^2} \right)}}} ,\sqrt {\frac{{2r}}{{{\sigma ^2}(1 -\rho _1^2\rho _2^2)}}} } \right).
\end{aligned}
\end{equation}
By substituting \eqref{F2} into \eqref{F1}, we have the joint CDF, which completes the proof.

\vspace{-0.1in}
\section*{Appendix~G: Proof of Corollary~\ref{out_up2}}\label{Appendix:Gs}
\renewcommand{\theequation}{G.\arabic{equation}}
\setcounter{equation}{0}
To derive the upper bound on the OP of Dual-FAS, we first adopt a similar approach in Appendix C to have the OP reduction in \eqref{G1}, which is on the bottom of the page.
\begin{figure*}[b]
\hrulefill
\begin{equation}\label{G1}
\begin{aligned}
\Delta {P_{out}} &= P_{out}^{\left( {{M_R} - 1} \right){M_T}} - P_{out}^{{M_R}{M_T}} \\
 &= \int_0^{{\gamma _{th}}} {{Q_1}\left( {\sqrt {\frac{{2\rho _2^2\rho _1^2t}}{{1 - \rho _1^2\rho _2^2}}} ,\sqrt {\frac{{2{\gamma _{th}}}}{{1 - \rho _1^2\rho _2^2}}} } \right){{\rm{e}}^{ - t}}}  \times \prod\limits_{i = 2}^{{M_R}} {\left( {1 - {Q_1}\left( {\sqrt {\frac{{2\rho _1^2t}}{{1 - \rho _1^2}}} ,\sqrt {\frac{{2{\gamma _{th}}}}{{1 - \rho _1^2}}} } \right)} \right)} \\
& \times \prod\limits_{j = 2}^{{M_T}} {\left( {1 - {Q_1}\left( {\sqrt {\frac{{2\rho _2^2t}}{{1 - \rho _2^2}}} ,\sqrt {\frac{{2{\gamma _{th}}}}{{1 - \rho _2^2}}} } \right)} \right)}  \times \prod\limits_{i = 2}^{{M_R} - 1} {\prod\limits_{j = 2}^{{M_T}} {\left( {1 - {Q_1}\left( {\sqrt {\frac{{2\rho _2^2\rho _1^2t}}{{1 - \rho _1^2\rho _2^2}}} ,\sqrt {\frac{{2{\gamma _{th}}}}{{1 - \rho _1^2\rho _2^2}}} } \right)} \right)} } dt.\\
\end{aligned}
\end{equation}
\end{figure*}

We can express the upper bound on the OP by substituting the lower bound of Marcum-Q function in \eqref{C2} into \eqref{G1} as
\begin{equation}\label{G2}
\begin{aligned}
&P_{out} \le \left( {1 - {e^{ - \frac{{2{\gamma _{th}}}}{{1 - \rho _1^2\rho _2^2}}}}} \right)P_{out}^{\left( {{M_R} - 1} \right){M_T}}\\
& = {\left( {1 - {e^{ - \frac{{2{\gamma _{th}}}}{{1 - \rho _1^2\rho _2^2}}}}} \right)}{\left( {1 - {e^{ - \frac{{{\gamma _{th}}}}{{1 - \rho _1^2}}}}} \right)^{{M_R} - 1}}\\
 &\times {\left( {1 - {e^{ - \frac{{{\gamma _{th}}}}{{1 - \rho _2^2}}}}} \right)^{{M_T} - 1}}{\left( {1 - {e^{ - \frac{{{\gamma _{th}}}}{{1 - \rho _1^2\rho _2^2}}}}} \right)^{\left( {{M_R} - 1} \right)\left( {{M_T} - 1} \right)}}P_{out}^1.
 \end{aligned}
\end{equation}
Finally, to complete the proof, we substitute the expression from \eqref{op2} for the special case of $M_R=M_T=1$ into \eqref{G2}.

\section*{Appendix~H: Proof of Corollary~\ref{out_low2}}\label{Appendix:Hs}
\renewcommand{\theequation}{H.\arabic{equation}}
\setcounter{equation}{0}
To derive the lower bound on the OP of Dual-FAS, we first further express the OP reduction in \eqref{G1} as:
\begin{equation}\label{H1}
\begin{aligned}
&\Delta {P_{out}}= \left. {{Q_1}\left( {\sqrt {\frac{{2\rho _2^2\rho _1^2t}}{{1 - \rho _1^2\rho _2^2}}} ,\sqrt {\frac{{2{\gamma _{th}}}}{{1 - \rho _1^2\rho _2^2}}} } \right)P_{out}^{\left( {{M_R} - 1} \right){M_T}}} \right|_0^{{\gamma _{th}}}\\
& - \int_0^{{\gamma _{th}}} {{{Q'}_1}\left( {\sqrt {\frac{{2\rho _2^2\rho _1^2t}}{{1 - \rho _1^2\rho _2^2}}} ,\sqrt {\frac{{2{\gamma _{th}}}}{{1 - \rho _1^2\rho _2^2}}} } \right)} P_{out}^{\left( {{M_R} - 1} \right){M_T}}dt\\
 &\le {Q_1}\left( {\sqrt {\frac{{2\rho _2^2\rho _1^2{\gamma _{th}}}}{{1 - \rho _1^2\rho _2^2}}} ,\sqrt {\frac{{2{\gamma _{th}}}}{{1 - \rho _1^2\rho _2^2}}} } \right)P_{out}^{\left( {{M_R} - 1} \right){M_T}}.
\end{aligned}
\end{equation}
The second step in~\eqref{H1} is established using the same approach as in Appendix D.
By appropriately rearranging the terms, we establish the lower bound of the OP in the form of the following recurrence relation:
\begin{equation}\label{H2}
\begin{aligned}
P_{out}^{{M_R}{M_T}} &\ge \left( 1 - {Q_1}\left( \sqrt{\frac{2\rho_2^2\rho_1^2{\gamma_{th}}}{1-\rho_1^2\rho_2^2}},\right.\right.\\
&\qquad\left.\left.\sqrt{\frac{2{\gamma_{th}}}{1-\rho_1^2\rho_2^2}} \right) \right) P_{out}^{(M_R-1)M_T}\\
&= {\left( 1 - {Q_1}\left( \sqrt{\frac{2\rho_1^2{\gamma_{th}}}{1-\rho_1^2}},\sqrt{\frac{2{\gamma_{th}}}{1-\rho_1^2}} \right) \right)^{M_R-1}}\\
&\times {\left( 1 - {Q_1}\left( \sqrt{\frac{2\rho_2^2{\gamma_{th}}}{1-\rho_2^2}},\sqrt{\frac{2{\gamma_{th}}}{1-\rho_2^2}} \right) \right)^{M_T-1}}\\
&\times {\left( 1 - {Q_1}\left( \sqrt{\frac{2\rho_1^2\rho_2^2{\gamma_{th}}}{1-\rho_1^2\rho_2^2}},\right.\right.}\\
&\qquad{\left.\left.\sqrt{\frac{2{\gamma_{th}}}{1-\rho_1^2\rho_2^2}} \right) \right)^{(M_R-1)(M_T-1)}} P_{out}^1.
\end{aligned}
\end{equation}
Finally, we set $M_R=M_T=1$ in the expression from \eqref{op2}. Substituting this result into \eqref{H2} then completes the proof.

\bibliographystyle{IEEEtran}
%\bibliography{IEEEabrv,FAS_RICIANBIB}

\begin{thebibliography}{10}
\providecommand{\url}[1]{#1}
\csname url@samestyle\endcsname
\providecommand{\newblock}{\relax}
\providecommand{\bibinfo}[2]{#2}
\providecommand{\BIBentrySTDinterwordspacing}{\spaceskip=0pt\relax}
\providecommand{\BIBentryALTinterwordstretchfactor}{4}
\providecommand{\BIBentryALTinterwordspacing}{\spaceskip=\fontdimen2\font plus
\BIBentryALTinterwordstretchfactor\fontdimen3\font minus
  \fontdimen4\font\relax}
\providecommand{\BIBforeignlanguage}[2]{{%
\expandafter\ifx\csname l@#1\endcsname\relax
\typeout{** WARNING: IEEEtran.bst: No hyphenation pattern has been}%
\typeout{** loaded for the language `#1'. Using the pattern for}%
\typeout{** the default language instead.}%
\else
\language=\csname l@#1\endcsname
\fi
#2}}
\providecommand{\BIBdecl}{\relax}
\BIBdecl

\bibitem{MIMO0}
G.~J. Foschini and M.~J. Gans, ``On limits of wireless communications in a fading environment when using multiple antennas,'' \emph{Wireless Pers. Commun.}, vol.~6, no.~3, pp. 311--335, Mar. 1998.

\bibitem{MIMO1}
A.~Paulraj, D.~Gore, R.~Nabar, and H.~Bölcskei, ``An overview of {MIMO} communications -- A key to gigabit wireless,'' \emph{Proc. IEEE}, vol.~92, no.~2, pp. 198--218, Feb. 2004.

\bibitem{MIMO2}
L.~Zheng and D.~Tse, ``Diversity and multiplexing: A fundamental tradeoff in multiple-antenna channels,'' \emph{IEEE Trans. Inf. Theory}, vol.~49, no.~5, pp. 1073--1096, May 2003.

\bibitem{Marzetta2010}
T.~L. Marzetta, ``Noncooperative cellular wireless with unlimited numbers of base station antennas,'' \emph{IEEE Trans. Wireless Commun.}, vol.~9, no.~11, pp. 3590--3600, Nov. 2010.

\bibitem{Massive1}
E.~G. Larsson, O.~Edfors, F.~Tufvesson, and T.~L. Marzetta, ``Massive {MIMO} for next generation wireless systems,'' \emph{IEEE Commun. Mag.}, vol.~52, no.~2, pp. 186--195, Feb. 2014.

\bibitem{Massive2}
H.~Q. Ngo, E.~G. Larsson, and T.~L. Marzetta, ``Energy and spectral efficiency of very large multiuser {MIMO} systems,'' \emph{IEEE Trans. Commun.}, vol.~61, no.~4, pp. 1436--1449, Apr. 2013.

\bibitem{Massive3}
L.~Lu, G.~Y. Li, A.~L. Swindlehurst, A.~Ashikhmin, and R.~Zhang, ``An overview of massive {MIMO}: Benefits and challenges,'' \emph{IEEE J. Sel. Topics Signal Process.}, vol.~8, no.~5, pp. 742--758, Oct. 2014.

\bibitem{6G1}
F.~Tariq \emph{et al.}, ``A speculative study on {6G},'' \emph{IEEE Wireless Commun.}, vol.~27, no.~4, pp. 118--125, Aug. 2020.

\bibitem{6G2}
Z.~Zhang \emph{et al.}, ``{6G} wireless networks: Vision, requirements, architecture, and key technologies,'' \emph{IEEE Veh. Technol. Mag.}, vol.~14, no.~3, pp. 28--41, Sept. 2019.

\bibitem{6G3}
W.~Saad, M.~Bennis, and M.~Chen, ``A vision of {6G} wireless systems: Applications, trends, technologies, and open research problems,'' \emph{IEEE Netw.}, vol.~34, no.~3, pp. 134--142, May/Jun. 2020.

\bibitem{Wang-xlmimo}
Z.~Wang \emph{et al.}, ``Extremely large-scale {MIMO}: Fundamentals, challenges, solutions, and future directions,'' \emph{IEEE Wireless Commun.}, vol.~31, no.~3, pp. 117--124, Jun. 2024.

\bibitem{Wang2024}
Z.~Wang \emph{et al.}, ``A tutorial on extremely large-scale {MIMO} for {6G}: Fundamentals, signal processing, and applications,'' \emph{IEEE Commun. Surveys Tuts.}, vol.~26, no.~3, pp. 1560--1605, Thirdquarter 2024.

\bibitem{NOMA}
Z.~Ding \emph{et al.}, ``Application of non-orthogonal multiple access in {LTE} and {5G} networks,'' \emph{IEEE Commun. Mag.}, vol.~55, no.~2, pp. 185--191, Feb. 2017.

\bibitem{RIS}
J.~Zhao, ``A survey of intelligent reflecting surfaces ({IRS}s): Towards {6G} wireless communication networks,'' Nov. 2019, \emph{arXiv:1907.04789}. [Online]. Available: \url{https://arxiv.org/abs/1907.04789}

\bibitem{MIMO_RIS_NOMA}
J.~Li \emph{et al.}, ``An {RIS}-aided interference mitigation-based design for {MIMO}-{NOMA} in cellular networks,'' \emph{IEEE Trans. Green Commun. Netw.}, vol.~8, no.~1, pp. 317--329, Mar. 2024.

\bibitem{MIMO_LIS}
T.~Hou \emph{et al.}, ``Performance analysis for large intelligent surfaces enabled {MIMO} networks,'' in \emph{Proc. IEEE Int. Conf. Commun. (ICC)}, Dublin, Ireland, Jun. 2020, pp. 1--6.

\bibitem{MIMO_RIS}
T.~Hou \emph{et al.}, ``{MIMO} assisted networks relying on intelligent reflective surfaces: A stochastic geometry based analysis,'' \emph{IEEE Trans. Veh. Technol.}, vol.~71, no.~1, pp. 571--582, Jan. 2022.

\bibitem{Sharawi2013}
M.~S. Sharawi, ``Printed multi-band {MIMO} antenna systems and their performance metrics,'' \emph{IEEE Antennas Propag. Mag.}, vol.~55, no.~5, pp. 218--232, Oct. 2013.

\bibitem{couple}
P.~S. Taluja and B.~L. Hughes, ``Diversity limits of compact broadband multi-antenna systems,'' \emph{IEEE J. Sel. Areas Commun.}, vol.~31, no.~2, pp. 326--337, Feb. 2013.

\bibitem{Alkhateeb2014}
A.~Alkhateeb, J.~Mo, N.~Gonzalez-Prelcic, and R.~W. Heath, ``{MIMO} precoding and combining solutions for millimeter-wave systems,'' \emph{IEEE Commun. Mag.}, vol.~52, no.~12, pp. 122--131, Dec. 2014.

\bibitem{Han2015}
S.~Han, I.~Chih-Lin, Z.~Xu, and C.~Rowell, ``Large-scale antenna systems with hybrid analog and digital beamforming for millimeter wave {5G},'' \emph{IEEE Commun. Mag.}, vol.~53, no.~1, pp. 186--194, Jan. 2015.

\bibitem{expensive}
R.~Zi, X.~Ge, J.~Thompson, C.-X. Wang, H.~Wang, and T.~Han, ``Energy efficiency optimization of {5G} radio frequency chain systems,'' \emph{IEEE J. Sel. Areas Commun.}, vol.~34, no.~4, pp. 758--771, Apr. 2016.

\bibitem{FAS}
K.-K. Wong, A.~Shojaeifard, K.-F. Tong, and Y.~Zhang, ``Fluid antenna systems,'' \emph{IEEE Trans. Wireless Commun.}, vol.~20, no.~3, pp. 1950--1962, Mar. 2021.

\bibitem{FAS1}
K.-K. Wong, K.-F. Tong, Y.~Zhang, and Z.~Zheng, ``Fluid antenna system for {6G}: When Bruce Lee inspires wireless communications,'' \emph{Electron. Lett.}, vol.~56, no.~24, pp. 1288--1290, Nov. 2020.

\bibitem{FAS2}
K.-K. Wong, K.-F. Tong, Y.~Shen, Y.~Chen, and Y.~Zhang, ``Bruce {Lee}-inspired fluid antenna system: Six research topics and the potentials for {6G},'' \emph{Front. Commun. Netw.}, vol.~3, Art. no. 853416, Mar. 2022.

\bibitem{block}
P.~Ramírez-Espinosa, D.~Morales-Jimenez, and K.-K. Wong, ``A new spatial block-correlation model for fluid antenna systems,'' \emph{IEEE Trans. Wireless Commun.}, early access, 2024, doi: 10.1109/TWC.2024.3434509.

\bibitem{mecha1}
L.~Zhu, W.~Ma, and R.~Zhang, ``Movable antennas for wireless communication: Opportunities and challenges,'' \emph{IEEE Commun. Mag.}, vol.~62, no.~6, pp. 114--120, Jun. 2024.

\bibitem{mecha2}
A.~Zhuravlev, V.~Razevig, S.~Ivashov, A.~Bugaev, and M.~Chizh, ``Experimental simulation of multi-static radar with a pair of separated movable antennas,'' in \emph{Proc. IEEE Int. Conf. Microw., Commun., Antennas Electron. Syst. (COMCAS)}, Tel Aviv, Israel, Nov. 2015, pp. 1--5.

\bibitem{mecha3}
X.~Li, Y.~Zhou, Z.~Shen, B.~Song, and S.~Li, ``Using a moving antenna to improve {GNSS/INS} integration performance under low-dynamic scenarios,'' \emph{IEEE Trans. Intell. Transp. Syst.}, vol.~23, no.~10, pp. 17717--17728, Oct. 2022.

\bibitem{liquid1}
G.~J. Hayes, J.-H. So, A.~Qusba, M.~D. Dickey, and G.~Lazzi, ``Flexible liquid metal alloy ({EGaIn}) microstrip patch antenna,'' \emph{IEEE Trans. Antennas Propag.}, vol.~60, no.~5, pp. 2151--2156, May 2012.

\bibitem{liquid2}
A.~M. Morishita, C.~K.~Y. Kitamura, A.~T. Ohta, and W.~A. Shiroma, ``A liquid-metal monopole array with tunable frequency, gain, and beam steering,'' \emph{IEEE Antennas Wireless Propag. Lett.}, vol.~12, pp. 1388--1391, 2013.

\bibitem{liquid3}
A.~Dey, R.~Guldiken, and G.~Mumcu, ``Microfluidically reconfigured wideband frequency-tunable liquid-metal monopole antenna,'' \emph{IEEE Trans. Antennas Propag.}, vol.~64, no.~6, pp. 2572--2576, Jun. 2016.

\bibitem{liquidFAS1}
Y.~Shen, K.-F. Tong, and K.-K. Wong, ``Reconfigurable surface wave fluid antenna for spatial {MIMO} applications,'' in \emph{Proc. IEEE-APS Topical Conf. Antennas Propag. Wireless Commun. (APWC)}, Honolulu, HI, USA, Aug. 2021, pp. 150--152.

\bibitem{liquidFAS2}
Y.~Shen, K.-F. Tong, and K.-K. Wong, ``Radiation pattern diversified single-fluid-channel surface-wave antenna for mobile communications,'' in \emph{Proc. IEEE-APS Topical Conf. Antennas Propag. Wireless Commun. (APWC)}, Cape Town, South Africa, Sept. 2022, pp. 49--51.

\bibitem{liquidFAS3}
Y.~Shen, K.-F. Tong, and K.-K. Wong, ``Radiation pattern diversified double-fluid-channel surface-wave antenna for mobile communications,'' in \emph{Proc. IEEE-APS Topical Conf. Antennas Propag. Wireless Commun. (APWC)}, Cape Town, South Africa, Sept. 2022, pp. 85--88.

\bibitem{EWOD}
H.~Wang, Y.~Shen, K.-F. Tong, and K.-K. Wong, ``Continuous electrowetting surface-wave fluid antenna for mobile communications,'' in \emph{Proc. IEEE Reg. 10 Conf. (TENCON)}, Hong Kong, China, Nov. 2022, pp. 1--3.

\bibitem{pixel1}
S.~Song and R.~D. Murch, ``An efficient approach for optimizing frequency reconfigurable pixel antennas using genetic algorithms,'' \emph{IEEE Trans. Antennas Propag.}, vol.~62, no.~2, pp. 609--620, Feb. 2014.

\bibitem{pixel2}
L.~Jing, M.~Li, and R.~Murch, ``Compact pattern reconfigurable pixel antenna with diagonal pixel connections,'' \emph{IEEE Trans. Antennas Propag.}, vol.~70, no.~10, pp. 8951--8961, Oct. 2022.

\bibitem{pixel3}
L.~N. Pringle \emph{et al.}, ``A reconfigurable aperture antenna based on switched links between electrically small metallic patches,'' \emph{IEEE Trans. Antennas Propag.}, vol.~52, no.~6, pp. 1434--1445, Jun. 2004.

\bibitem{tutorial}
W.~K. New \emph{et al.}, ``A tutorial on fluid antenna system for {6G} networks: Encompassing communication theory, optimization methods and hardware designs,'' \emph{IEEE Commun. Surveys Tuts.}, vol.~27, no.~4, pp. 2325--2377, Thirdquarter 2025.

\bibitem{FAS_ER}
K.-K. Wong, A.~Shojaeifard, K.-F. Tong, and Y.~Zhang, ``Performance limits of fluid antenna systems,'' \emph{IEEE Commun. Lett.}, vol.~24, no.~11, pp. 2469--2472, Nov. 2020.

\bibitem{FAS_partI}
K.-K. Wong, W.~K. New, X.~Hao, K.-F. Tong, and C.-B. Chae, ``Fluid antenna system---Part {I}: Preliminaries,'' \emph{IEEE Commun. Lett.}, vol.~27, no.~8, pp. 1919--1923, Aug. 2023.

\bibitem{FAS_partII}
K.-K. Wong, K.-F. Tong, and C.-B. Chae, ``Fluid antenna system---Part {II}: Research opportunities,'' \emph{IEEE Commun. Lett.}, vol.~27, no.~8, pp. 1924--1928, Aug. 2023.

\bibitem{wong2022closed}
K.-K. Wong, K.-F. Tong, Y.~Chen, and Y.~Zhang, ``Closed-form expressions for spatial correlation parameters for performance analysis of fluid antenna systems,'' \emph{Electron. Lett.}, vol.~58, no.~11, pp. 454--457, Apr. 2022.

\bibitem{Khammassi-2023}
M.~Khammassi, A.~Kammoun, and M.-S. Alouini, ``A new analytical approximation of the fluid antenna system channel,'' \emph{IEEE Trans. Wireless Commun.}, vol.~22, no.~12, pp. 8843--8858, Dec. 2023.

\bibitem{Zhao2025FAS}
H.~Zhao and D.~Slock, ``Analytical insights into outage probability and ergodic capacity of fluid antenna systems,'' \emph{IEEE Wireless Commun. Lett.}, vol.~14, no.~5, pp. 1581--1585, May 2025.

\bibitem{FAS_OP_diversity}
J.~D. Vega-S\'{a}nchez, A.~E. L\'{o}pez-Ram\'{i}rez, L.~Urquiza-Aguiar, and D.~P.~M. Osorio, ``Novel expressions for the outage probability and diversity gains in fluid antenna system,'' \emph{IEEE Wireless Commun. Lett.}, vol.~13, no.~2, pp. 372--376, Feb. 2024.

\bibitem{nakagami1}
L.~Tlebaldiyeva, G.~Nauryzbayev, S.~Arzykulov, A.~Eltawil, and T.~Tsiftsis, ``Enhancing {QoS} through fluid antenna systems over correlated Nakagami-$m$ fading channels,'' in \emph{Proc. IEEE Wireless Commun. Netw. Conf. (WCNC)}, Austin, TX, USA, Apr. 2022, pp. 78--83.

\bibitem{nakagami2}
J.~D. Vega-Sánchez, L.~Urquiza-Aguiar, M.~C.~P. Paredes, and D.~P.~M. Osorio, ``A simple method for the performance analysis of fluid antenna systems under correlated Nakagami-$m$ fading,'' \emph{IEEE Wireless Commun. Lett.}, vol.~13, no.~2, pp. 377--381, Feb. 2024.

\bibitem{copula}
F.~R. Ghadi, K.-K. Wong, F.~J. López-Martínez, C.-B. Chae, K.-F. Tong, and Y.~Zhang, ``A Gaussian copula approach to the performance analysis of fluid antenna systems,'' \emph{IEEE Trans. Wireless Commun.}, vol.~23, no.~11, pp. 17573--17585, Nov. 2024.

\bibitem{rician}
J.~Huangfu \emph{et al.}, ``Performance analysis of fluid antenna system under spatially-correlated Rician fading channels,'' \emph{IEEE Trans. Wireless Commun.}, early access, 2025, doi: 10.1109/TWC.2025.3590722.

\bibitem{FAS_alpha_mu}
P.~D. Alvim \emph{et al.}, ``On the performance of fluid antenna systems under $\alpha$-$\mu$ fading channels,'' \emph{IEEE Wireless Commun. Lett.}, vol.~13, no.~1, pp. 108--112, Jan. 2024.

\bibitem{FAS_ASER}
N.~Kapucu and M.~Bilim, ``{ASER} analysis of fluid antenna systems with rectangular and hexagonal {QAM} schemes,'' \emph{AEU--Int. J. Electron. Commun.}, vol.~196, Art. no. 155792, Jun. 2025.

\bibitem{MRC}
X.~Lai, T.~Wu, J.~Yao, C.~Pan, M.~Elkashlan, and K.-K. Wong, ``On performance of fluid antenna system using maximum ratio combining,'' \emph{IEEE Commun. Lett.}, vol.~28, no.~2, pp. 402--406, Feb. 2024.

\bibitem{trade}
W.~K. New, K.-K. Wong, H.~Xu, K.-F. Tong, and C.-B. Chae, ``An information-theoretic characterization of {MIMO-FAS}: Optimization, diversity-multiplexing tradeoff and $q$-outage capacity,'' \emph{IEEE Trans. Wireless Commun.}, vol.~23, no.~6, pp. 5541--5556, Jun. 2024.

\bibitem{marcumq}
M.~K. Simon, \emph{Probability Distributions Involving Gaussian Random Variables: A Handbook for Engineers and Scientists}. New York, NY, USA: Springer, 2002.

\bibitem{integral}
I.~S. Gradshteyn and I.~M. Ryzhik, \emph{Table of Integrals, Series, and Products}, 6th ed. New York, NY, USA: Academic Press, 2000.



\end{thebibliography}

% Generated by IEEEtran.bst, version: 1.14 (2015/08/26)

\end{document}